\newcommand{\F}{\mathcal{F}}
\newcommand{\dec}[2]{dec_{#1}(#2)}
\newcommand{\SL}{\mathcal{SL}}
\renewcommand{\a}{\alpha}
\renewcommand{\b}{\beta}
\renewcommand{\c}{\gamma}
\newcommand{\f}{\varphi}
\newcommand{\p}{\psi}
\newcommand{\lang}{\L}
\renewcommand{\L}{\mathcal{L}}
\newcommand{\spc}{\Pi}
\newcommand{\at}{\mathit{At}}
\newcommand{\fm}{\SL}
\newcommand{\init}{\mathit{Supp}}
\renewcommand{\sf}{\mathit{S}}
\newcommand{\zinf}{*}
\newcommand{\contr}{\bot}
\newtheorem{definition}{Definition}
\newtheorem{proposition}{Proposition}
\newtheorem{theorem}{Theorem}
\newtheorem{lemma}{Lemma}
\newtheorem{example}{Example}
\newtheorem{remark}{Remark}
\newcommand{\paolo}[1]{#1}
\newcommand{\hykel}[1]{#1}
\title{A Logic-based Tractable Approximation of Probability}
\author{Paolo Baldi \and Hykel Hosni}
\address{Department of Philosophy, University of Milan}
\email{paolo.baldi,hykel.hosni@unimi.it}
\begin{document}

\maketitle

\begin{abstract}
We provide a logical framework in which a resource-bounded agent can be seen to perform approximations of probabilistic reasoning. Our main results read as follows. First we identify the conditions under which propositional probability functions can  be approximated by a hierarchy of depth-bounded Belief functions. Second we show that under rather palatable restrictions, our approximations of probability lead to uncertain reasoning which, under the usual assumptions in the field, qualifies as tractable. 
\end{abstract}

\section{Introduction and motivation}
Paraphrasing Laplace's famous dictum, probability depends both on the information we do possess and on the information we do not. Underlying this picture are two ideas:  first, probability functions quantify uncertainty with the operational meaning of ``rational'' degrees of belief \cite{DeFinetti1931,Savage1972,Lindley1982,Paris1994} and second, information \emph{resolves} fully uncertainty. In other words, probability provides a principled estimate of an agent's uncertainty about events whose truth-conditions are fully decided by information as yet not possessed by the agent.

For a straightforward illustration which is also useful to fix some basic terminology, take an urn with $n$ balls which only differ in colour, say yellow, red and green. \paolo{Assume that an agent is further equipped with the information that 1/3 of the balls are yellow, 1/3 are red, 1/3 green, and is asked to draw a ball from the urn. The agent is uncertain about the colour of the particular ball that will be drawn from the urn, but knows which conditions will need to obtain for, say the event ``the ball is red'' to have happened. Furthermore, she knows the distribution of the balls in the urn, and will thus readily assess the likelihood of each particular outcome of this kind of experiment, i.e. will rationally assess her degree of belief on the event e.g. ``the ball is red'', $Bel(Red)$ for short, equal to the proportion of red balls on the urn, that is $Bel(Red) = 1/3$. The same will clearly apply for $Bel(Green)$ and $Bel(Yellow)$. }

\paolo{ Now, let us imagine a second scenario, where an agent only knows that the balls in the urn are either yellow, red or green, without knowing their distribution. Here, the ``Laplacian'' way of looking at the problem will translate the lack of information about the composition of the urn into very precise probabilistic (as it turns out) information about the event of interest: since no further information on the composition is indeed available, then there is no reason to depart from the assumption that colours are uniformly distributed in the urn, which readily gives $Bel(Red)=1/3$, and similarly for \emph{Green} and \emph{Yellow}.}

\paolo{Both scenarios sketched so far lead to the same assignment, but relying on different kinds of information. This is however problematic, as shown by the following example, where both types of situations coexist in the same context. 
Take another three-colours urn composed for 1/3 of red balls and for 2/3 of balls which are either yellow or green, in unknown proportion.
 Whilst the assignment of belief to $Red$ will follow the pattern of the first scenario, and, using the provided information about the proportion of red balls will lead to  $Bel(Red)=1/3$, the assignment of belief to the other two colours would follow the pattern of the second scenario. 
A reasoning based on the lack of information will lead indeed to
$Bel(Yellow)=Bel(Green)=1/3$, and again, we will have $Bel(Yellow)=Bel(Green)= Bel(Red) =1/3$, with an assignment motivated by two different kinds of information (and lack thereof). }

 Over the past century many have taken issue with the Laplacian line of argument above. Among them \cite{Keynes1921,Shafer1976a,Walley1991} had a profound and continuing impact on the foundations of uncertain reasoning, whereas \cite{Knight1921,Ellsberg1961,Schmeidler1989,Jaffray1989} brought related problems to the attention of decision theory. More recently, similar concerns have been raised in artificial intelligence from a variety of distinct angles, among which \cite{Dubois2001c,Kyburg2001} had significant impact. With regard to this growing literature, readers may find out more in the general exposition provided by \cite{Halpern2003} and an up-to-date overview \cite{Marquis2020}.

Whilst the formal models which originated from the above criticisms may diverge significantly, a unifying feature which is easily found in all of them is the criticism of the additivity forced upon degrees of belief by the probabilistic representation of uncertainty. This, as will be developed in the remainder of this section, can be undesirable in those situations in which telling precisely the information an agent possesses from that which it doesn't, is neither practically nor conceptually straightforward. The research reported in this paper aims at adding a logical dimension to this long-standing issue in the foundations of uncertain reasoning.


Unlike the above mentioned criticisms, we do not take issue directly with the additivity of the measure representing uncertainty. Rather, we focus on an important but very much neglected element in the analysis of the quantification/resolution view of uncertain reasoning, namely the fact that it requires certain \emph{inferential abilities}.
 \paolo{These, under suitable conditions, can give rise to \emph{information}, which eventually leads the agent's degree of belief converge to an additive measure -- i.e. a probability.} By tackling the problem at its logical roots we also hope to encourage logicians to join this very interesting multidisciplinary conversation of the very foundations of rational reasoning and decision-making.

To motivate the need for a logical approach to the problem,  it is useful to point out explicitly the way \emph{classical} logic constraints (finitely additive) probability functions. So take
$\vdash$ to be the classical consequence relation on a (finite) propositional language $\lang$. A \emph{probability function on $\lang$}  is a normalised and finitely additive function
from the sentences of the language to the real unit interval, i.e. one satisfying the following conditions:
\begin{enumerate}
\item[(PL1)] If $\vdash \theta$  then $P(\theta)=1$,
\item[(PL2)] If $\vdash \neg (\theta\wedge \f)$ then  $P(\theta\vee
  \f)=P(\theta)+ P(\f)$.
\end{enumerate}
The Stone representation theorem for Boolean algebras and the
representation theorem for probability functions (see e.g. \cite{Paris1994})
guarantee that the Kolmogorov measure-theoretic axiomatisation of probability is
equivalent to the logical one above. Framing the axioms logically however puts forward the role  of $\vdash$ in giving probability functions an unwelcome inability to distinguish between ``probabilistic knowledge'' and  ``probabilistic ignorance''. To see this, recall that $\vdash \theta\vee\neg\theta$, and that (PL1) and (PL2) immediately give us 
\begin{equation}
P(\neg \theta)=1-P(\theta).\label{eq:lem}
\end{equation}
This amounts to saying that in the framework we have no way of representing a very common situation: the agent doesn't known anything about $\theta$. 
Instead of allowing the agent to represent its lack of opinions about the event $\theta$, \eqref{eq:lem}
effectively translates absence of information about $\theta$ into information about its negation.
Whilst this is easily seen to be inappropriate in many realistic scenarios, to the best of our knowledge \cite{DAgostino2017} was the first paper to point out that the root of the problem, and hence a robust way to fix it, was not to be found (exclusively) in the axioms of probability functions, but at the more fundamental level of the underlying consequence relation.

This is apparent by focusing on another immediate consequence of (PL1), (PL2):
\begin{equation}
  \theta \vdash \f \text{ implies }  P(\theta)\le P(\f),\label{eq:mon}
  \end{equation}
  which expresses the monotonicity of probability functions with respect to $\vdash$. This property is mathematically convenient,  but puts on the agent a very heavy burden which owes to its intractability. Indeed, a logical deduction from $\theta$ to $\f$ might be highly nontrivial and hard to find, making the application of \eqref{eq:mon} a constraint of rationality that realistic agents may not be in a position to comply with. Many authors, including \cite{Savage1967,Hacking1967} have pointed out that imposing that ``rational degrees of belief'' should be deductively closed is too tall an order, but they did not tackle the problem from the logical point of view. \paolo{This is what this paper attempts to do, by drawing on the theory of Depth-bounded Boolean logic (DBBL) investigated in  \cite{DAgostino2013,DAgostino2015}.}

 A characteristic feature of DBBL is its informational nature. Whereas connectives in classical logic are defined in terms of truth-values, DBBL provide an informational view of logical consequence which, as a by-product, also provides a tractable approximation of it. Thus DBBL provide the ideal setting to tackle the shortcomings of the probabilistic quantification of uncertainty highlighted by \eqref{eq:lem} and \eqref{eq:mon} above.

  The central idea behind the (semantic) approach to DBBL is to distinguish two kinds of information. The first is information which the agent possesses explicitly or can trivially infer from it. For  definiteness, it is the kind of information that an agent holds when she holds the information that a conjunction is true, i.e. that both conjuncts are true.
  Put otherwise,  the information that both conjuncts are true is immediately available to the agent possessing the information that the conjunction is true. The second kind of information is more subtle and is \emph{obtained} by making a specific kind of hypothetical reasoning. This is closely related to classical reasoning by cases, with the additional requirement that its disjunctive premises always partition the set of (classical) valuations.

To illustrate this kind of reasoning, let us consider a puzzle, originally due to Hector Levesque, and which has been largely discussed in the psychology of reasoning literature, see e.g.  Chapter 6 of \cite{Stanovich2009}.

\begin{example}\label{ex:ann}
  
Experimental subjects are faced with the following test:
  
\begin{quote}
  Jack is looking at Anne, and Anne is looking at George. Jack is married, but George is not. Is a married person looking at an unmarried person?

  \begin{itemize}
  \item[A)] Yes
  \item[B)] No
  \item[C)] Cannot be determined.
      \end{itemize}
\end{quote}

It is reported by \cite{Stanovich2009} that over 80\% of subjects give a wrong answer,  the vast majority of which being C).

Here is a plausible explanation for this systematic mistake, which was certainly engineered in the problem. The subject is provided with some explicit information, which in first-order logic notation can be represented as follows:
\begin{eqnarray}
  \label{eq:1}
  Looks(Jack,Anne)\\
  \label{eq:2}
  Looks(Anne,George)\\
  \label{eq:3}
 Married(Jack)\\
  \label{eq:4}
\neg Married(George).
\end{eqnarray}
Hence there is no information which is provided about the marital status of Ann, which probably leads most respondents to this frequent mistake.

The correct answer is instead A), which can be arrived at by performing a particular kind of hypothetical reasoning based on the fact that
\begin{equation}\label{eq:biv}\tag{Biv}
\text{either } Married(Anne) \text{ or } \neg Married (Anne).
\end{equation}
Those mutually exclusive disjuncts clearly exhaust the set of classical models. So we get two cases, both of which decide positively the question of interest, namely
\paolo{$$\exists x\exists y \f(x,y) \qquad \text{ where } \f(x,y) := Looks(x,y)\land Married(x) \land \neg Married(y) .$$}

For we have, from \eqref{eq:biv} and the information in \eqref{eq:1}-\eqref{eq:4} that either
$$
\gamma:= Married (Anne) \wedge \neg Married(George) \wedge   Looks(Anne,George)$$
or
$$\gamma':= \neg Married (Anne) \wedge  Married(Jack) \wedge   Looks(Jack,Anne),
$$
holds, with $\f(Anne,George)$ following immediately from $\gamma$, and $\f(Jack,Anne)$ following immediately from $\gamma'$.

\end{example}
This example singles out two senses of ``information''. The first is conveyed by \eqref{eq:1}-\eqref{eq:4} which is readily seen as information actually held by the experimental subjects. In addition, subjects can be seen to attain logical conclusions which exceed the information actually held, \paolo{by means of hypothetical reasoning}. This essentially amounts to applying a single instance of the reasoning by cases on a pair of jointly exhaustive alternatives - as pointed out in \eqref{eq:biv}, either Anne is married or she is not.





The hierarchy of Depth-bounded Boolean logic arises by bounding the number $k$ of allowed nested iterations of hypothetical inferential steps of the kind just illustrated. If $k=0$, then only information actually held by the agent can be used as the premises of a logical deduction, yielding the $0$-depth consequence relation denoted by $\vdash_0$.  So, an agent confined to a $0$-depth consequence relation, would be unable to solve the problem of Example \ref{ex:ann}, which is readily solved above at depth 1.

For $0\le k < m$ it can be shown that $\vdash_k\ \subset \ \vdash_m$, whereas $\lim_{k\to\infty}\vdash_k \ = \ \vdash$. This justifies interpreting the hierarchy of DBBLs as an approximation to classical logic, which is indeed attained when the agent is allowed an unbounded use of hypothetical information along the lines illustrated. 
As we shall point out in due course, keeping this bound in place also results in each  element $\vdash_k$ of the sequence being tractable, in the sense of computational complexity. 

\paolo{ While various approaches in the literature  \cite{Finger06,Levesque2020} have similarly focused on feasible approximation of logical systems, what sets apart DBBLs from those is its clear information-based semantics: this provides indeed an uncertainty resolution mechanism which is naturally more adequate than a truth-based semantics, for tackling our motivating questions.}
Recall that an \emph{event}, understood classically, is a binary random variable. This is captured by the (classical) valuation
\begin{equation*}
 v(\theta)=
 \begin{cases}
   1 \text{ if the event obtained}\\

   0 \text{ otherwise},
 \end{cases}
\end{equation*} 
which represents the event's indicator function. In this classical picture, uncertainty is resolved by a logical valuation, which is a total function from the language to the binary set $\{0,1\}$. In probabilistic modelling, this amounts to the usual assumption concerning the ``deterministic nature'' of the phenomena whose uncertainty is being quantified. See e.g. \cite{DeFinetti1974} for an authoritative point of view along these lines.

By contrast, an information-based uncertainty resolution accounts explicitly for the possibility that the agent is not informed about a particular event. This can be illustrated by decorating the valuation symbol used above with a superscript $i$, meant to suggest ``information'':
\begin{equation*}
 v^i(\theta)=
 \begin{cases}
   1 \text{ if the agent is informed that $\theta$ obtained}\\
   0 \text{ if the agent is informed that $\neg\theta$ obtained}\\
  \zinf \text{ if the agent is not informed about $\theta$ }.
 \end{cases}
\end{equation*}

So in Example \ref{ex:ann} above, it is natural to represent the fact that the marital status of Ann is unspecified by $v^i(Married(Anne))=\zinf$. However, a single binary branching leads to the hypothetical resolution of this uncertainty, with one branch deciding $v^i(Married(Anne))$ positively, and the other negatively.

\paolo{The key difference between $v$ and an evaluation such as $v^i$ lies in the fact that the latter allows for a bounded number $k$ of iterations of hypothetical uncertainty resolution, and, in particular when $k=0$ may  provide only a \emph{partial} evaluation of the sentences/events.} This  in turn leads to a ``non-deterministic'' semantics \cite{DAgostino2013} which is instrumental to avoiding the automatic translation of the agent's ignorance about $\theta$ to information about $\neg\theta$.  


In analogy to  DBBLs, this paper investigates a hierarchy of \emph{Depth-bounded Belief functions} which approximate probability functions and asymptotically coincide with them. Our construction is inspired by the theory of Dempster-Shafer Belief functions \cite{Shafer1976a}, as suggested by our choice of terminology. As in DS-theory,  none of the belief functions in our hierarchy is constrained by additivity  (PL2 above), except for the one attained in the limit. More precisely, given two
  incompatible events $\f$ and $\p$, a depth-bounded belief function might give $B_k(\f\lor\p)$ a value in $[0,1]$ which is distinct from $B_k(\f)+B_k(\p)$. Going back to our comments on \eqref{eq:lem} above, the non-additive nature of our belief functions is necessary to ensure a distinct representation of `knowledge' and `ignorance'. However, unlike DS-theory (and related proposals) it is achieved {primarily} by replacing classical logic as the uncertainty resolution mechanism -- i.e. by changing  `the logic of events'--  rather than by {just} relaxing the axiomatic properties of the uncertainty measure leaving the underlying logic classical, as put forward in the position paper \cite{DAgostino2017}.

Finally, a very interesting feature of the framework to be introduced below is in our view this: endowing the agent with higher logical abilities (as captured by the index of the relation $\vdash_k$) allows her to obtain increasingly tighter approximations of $B_k(\f)+B_k(\p)$ by $B_k(\f\lor\p)$. This puts forward a seemingly novel approach, a logic-based one, to Ellsberg-like problems, as illustrated by Example \ref{ellsberg} below.

\paolo{ As a welcome byproduct, major reasoning tasks which are based on any member $B_k$ of our hierarchy will turn out to be computationally tractable, in a sense to be clarified in Section \ref{sec:complex}. }


\subsection{Plan of the paper and summary of contributions}
Section \ref{sec:bounded} presents the key features of Depth-bounded boolean logics in a form which suits our present needs. Since this section is instrumental to making this paper self-contained, we refer the interested reader to \cite{DAgostino2013,DAgostino2015} for the full logical details and \cite{DAgostino09} for the philosophical underpinnings of the theory of DBBL. Section \ref{sec:approximating-probability-functions}  introduces the belief functions which provide, through our main results, an approximation to probability functions. 
\paolo{The resulting approximation will be immune, by logical design, to the problem of turning the absence of information about an event into information about its negation.}
Section \ref{sec:complex} investigates the complexity of our approximation and
identifies the reasoning tasks with depth-bounded belief functions which are feasible (under the usual assumptions which will be briefly recalled). Section \ref{sec:conclusion} collects some final remarks and points to the directions for future work on this topic. 

Our main results read as follows. Theorem \ref{thm:approxprob} shows that each probability function can can be approximated by a hierarchy of Depth-bounded Belief functions, while, conversely, Theorem \ref{thm:prob} singles out the conditions under which our Depth-bounded Belief functions actually determine a probability in the limit. Theorems \ref{poly0},\ref{polyk}, and \ref{inf} show that under rather palatable restrictions, the depth-bounded \paolo{functions} introduced here are an adequate tool to tackle the well-known unfeasibility of logic-based uncertain reasoning \cite{Paris1994,sipser}.

The framework and results presented below extend significantly those of  \cite{Baldi2020a}, which is therefore superseded by the present paper. In the related  \cite{Baldi2020} we put forward a logic-based framework which served as a direct inspiration for the one introduced below, and whose aim was the approximation of DS {Belief functions}. So the main result of \cite{Baldi2020} is that the hierarchy of approximating measures asymptotically coincides with DS Belief functions.
Note that elements in the hierarchy of depth-bounded belief functions in \cite{Baldi2020} only provide a logical model of agents with increasingly higher inferential ability, which nonetheless has no impact on the resolution of uncertainty. This is reflected formally by the fact that, even in the limit, the Depth-bounded belief functions in \cite{Baldi2020} need not be additive. This work, in contrast, puts the hierarchy of depth-bounded belief functions in correspondence with an increasingly higher ability to resolve uncertainty \paolo{beyond}
 the information actually held by the agent. This leads in the limit to \emph{additive} measures. Furthermore, \cite{Baldi2020} provides no analysis of the computational complexity of Depth-bounded belief functions (though it is conjectured in there that they are actually intractable, in the most general case).


\section{Bounded sequences and Bounded logics}\label{sec:bounded}
We begin with some conventions and terminology. We will denote by $\L$ a 
finite propositional language, with variables $p,q$ etc., connectives $\neg, \wedge,\vee$, and the constant $\contr$, standing for any contradiction. The set of
sentences generated recursively from $\L$ will be denoted by $\fm$.   Lowercase Greek letters denote sentences, and uppercase Greek letters sets of
sentences.  
Given any sentence $\f\land\p,\f\lor\p$ in $\fm$, the sentences $\f,\p$ are said to be 
\emph{the immediate subsentences} of $\f\land \p, \f\lor \p$, respectively. Similarly, for any sentence $\neg\f$, its immediate subsentence is $\f$. A propositional variable has no immediate subsentences. The set of
\emph{subsentences} of $\f$ is denoted by $S(\f)$ and is the smallest set containing $\f$ and closed under
immediate subsentences -- ditto for $S(\Gamma)$, where $\Gamma$ is a set of sentences. Finally $At_\lang$ denotes the set of \emph{atoms}, i.e. the maximal (classically) consistent conjunctions of literals from $\lang$.

Trees, and sequences thereof, will represent \paolo{the information}
 intervening in depth-bounded reasoning. In our framework the root of a tree is labelled with the information initially held by the agent. If no such information is available to the agent, the root will be labelled by $\zinf$. Otherwise it will be labelled by a set of sentences from $\mathcal{SL}$. 
\paolo{In what follows, the leaves of a tree $T$ are denoted by $Le(T)$, and the \emph{depth of a node} $\a$ in a tree is the length of the path from the root of the tree to $\a$.}
 
Branching is where the core construction \paolo{of the other kind of information, the hypothetical one}, leading to uncertainty resolution, takes place, along the line of our first definition:


\begin{definition}\label{Tk}
Let  $\gamma \in \fm \cup\{\zinf\}$. 
A \emph{depth-bounded tree sequence ${(T_k)}_{k\in\mathbb{N}}$ (DBT-sequence for short) rooted in $\gamma$} is defined recursively as follows:
\begin{itemize}
\item $T_0$ is a tree having as only node its root $\gamma$.
\item for $k> 0$, $T_{k}$ is obtained by branching at least one node $\a$ 
of depth $k-1$  in $T_{k-1}$ 
with two nodes $\a\land \b$ and $\a\land \neg \b$, \paolo{for some $\b\in \fm$.}
\end{itemize}
A \emph{ finite depth-bounded tree sequence} is any initial segment ${(T_k)}_{k<n}$ of a depth-bounded sequence ${(T_k)}_{k\in\mathbb{N}}$, for some $n\in \mathbb{N}$.
\end{definition}
\paolo{A few consideration on the definition above are in place. First, the formula $\b\in\fm$, used for the branching in the Definition above, is not fixed, but different formulas can be used for each different parent node and each iteration of the construction of the $T_k$s. Second, in case $\gamma$ is the empty information $\zinf$, we will take $T_1$ to be a tree rooted in $\zinf$, with two branchings $\b$ and $\neg \b$, for some sentence $\b\in\fm$. Finally, note that at least one formula $\a\in Le(T_k)$ has to be of depth $k$, but this need not be the case for all the formulas in $Le(T_k)$.
Consider indeed the following.}

\begin{example}\label{ex:depth}
\[
\Tree[.$\zinf$ [.\ $p$ [.\ $q$ ] [.\ $q$ ]] [.\ $\neg p$ [.\ $r$ ] [.\ $\neg r$ ]]]    \ \  \ \
\Tree[.$\zinf$ [.\ $p$ [.\ $q$ ] [.\ $\neg q$ ]] [.\ $\neg p$ ]]
\]
Both the tree above on the left, that we will denote by $T_2$, and that on the right, that we will denote by $S_2$, are trees of depth 2, obtained by branching a tree (say $T_1=S_1$) composed of $\zinf$ and children nodes $p,\neg p$. 
However, $T_2$ is obtained by branching each formula in $Le(T_1)$, so that each node in $Le(T_2)$ has depth $2$. 
On the other hand, $S_2$ is obtained by expanding only the node $p$, of depth 1, in $Le(S_1)$. The node  $\neg p \in Le(S_1)$ is not expanded, so that we will also have $\neg p \in Le(S_2)$, while $\neg p$ is still of depth 1. Note that it will not suffice to expand the latter node to obtain a tree of depth 3, but rather at least one of the nodes of depth 2 in $Le(S_2)$ has to be expanded.
\end{example}
\hykel{The characteristic feature of Definition \ref{Tk} is {that} trees form an ordered \emph{sequence}, which can be interpreted as a hierarchy of levels of hypothetical uncertainty resolution branchings, of the kind illustrated by Example \ref{ex:ann}. Each successive level is obtained through a new binary branching aimed at deciding a yet undecided sentence and its negation. In this sense each new level of hypothetical uncertainty resolution is achieved by means of a constrained version of reasoning by cases. 
Put in yet another way, step $k+1$ in the recursive construction of ${(T_k)}_{k\in\mathbb{N}}$ amounts  to the binary expansion of some \mbox{$k$-depth} leaves by new children nodes, each being labeled by a pair of mutually inconsistent  pieces of information.
Hence depth-bounded sequences can be seen as possible explorations of how uncertainty may be resolved by information which is \emph{not} held by the agent, starting from the information which the agent does hold.
Clearly, agents whose reasoning is constrained by distinct bounds may not resolve the relevant uncertainty in the same way. This provides the logical basis  to add in Section \ref{sec:approximating-probability-functions}  an important and as yet under appreciated dimension to the the Laplacian dictum to the effect that probability depends in part on ``knowledge'' and in part on ``ignorance''.}


Going back to the construction in Definition \ref{Tk}, note that if a leaf contains an atom as a subsentence and is not contradictory, then \emph{all} uncertainty in the sentences of interest has been resolved, i.e. the information reached coincides with a classical truth-value. However, and this is our main interest here, the agent will not normally possess sufficient resources to get there, and hence will be forced to content herself with an approximation of such truth-value. As we prove in the next section, the greater the depth of the tree-sequence, the better the agent will approximate a probabilistic estimate of such an uncertainty.  

\paolo{We must now integrate our framework with a sensible way of applying the recursive step in Definition \ref{Tk}, i.e. for selecting only limited, \emph{useful}} amount of branchings. 
A criterion of ``usefulness'' emerges rather naturally: given a sentence $\f$ in classical logic that an agent might want to decide, the agent checks which leaves $\a$ classically derive $\f$, i.e. $\a\vdash\f$. However, this is clearly not viable for resource-bounded agents. First, owing to the intractability of classical logic. Second, as pointed out in  \cite{DAgostino2015} and informally in the introduction to this paper, a classical derivation $\a\vdash\f$ might itself make ``hidden''  use of hypothetical information, beyond that encoded in $\a$. 
This would be undesirable given our goal of accounting for the distinct roles played by actual and hypothetical information in the construction of the agent's uncertainty quantification. Hence we resort 
to 0-depth inference, introduced in \cite{DAgostino2013}. This logic, as mentioned in the introduction, does allow any use of hypothetical information, and, as a welcome byproduct, it is computationally tractable.

So, in preparation for Proposition \ref{treeandlogic}, which collects the key properties of depth-bounded sequences, we briefly recall now the basic features of 0-depth inference from a  proof-theoretic point of view. The presentation is entirely based on the IntElim (Introduction and Elimination) rules in Table \ref{intel}.  We note in passing that this logic, and indeed the whole hierarchy of depth-bounded boolean logics which stems from it, is sound and complete with respect to a non-deterministic, information-based semantics. See \cite{DAgostino2015,DAgostino2013} for more details.
\begin{table}[h!]
\centering
\begin{tabular}{|ccc|}
\hline
& & \\[1.0em]
$\infer[(\wedge \mathcal{I})
]{\f\land \p}{\f & \p } $  
&  
         $\infer[( \neg \wedge \mathcal{I}1)]{ \neg (\varphi \wedge \psi)}{\neg \varphi}$
&
$\infer[( \neg \wedge \mathcal{I}2)]{\neg (\varphi \wedge \psi)}{\neg \psi}$
\\ & & \\ 
               $\infer[ ( \neg \vee \mathcal{I})]{\neg(\varphi \vee \psi)}{\neg \varphi &    \neg \psi}$
  &     
    $\infer[ (\vee \mathcal{I}1)]{\varphi \vee \psi}{\varphi}$
      &
       $  \infer[ (\vee \mathcal{I}2)]{\varphi \vee \psi}{\psi}$
      \\ & & \\ 
  $\infer[ (\contr \mathcal{I})]{\contr}{ \varphi \ \     \neg \varphi}$
 & 
              $\infer[ (\neg \neg \mathcal{I})]{\neg \neg \varphi}{\varphi}$
       &
\\ & & \\ 
        $\infer[ (\vee \mathcal{E}1)]{\psi}
{\varphi \vee \psi \ \
     \neg \varphi}
$
           &
             $ \infer[  (\vee \mathcal{E}2)]
{\varphi}{\varphi \vee \psi \ \
     \neg \psi}$
           &      
$\infer[ (\neg \vee \mathcal{E}1)]{\neg \varphi}
{\neg (\varphi \vee \psi)}$
  \\  & & \\ 
$      \infer[ (\neg \vee \mathcal{E}2)]{\neg \psi}{\neg (\varphi \vee \psi)}$
 &  
       $  \infer[ (\wedge \mathcal{E}1)]{
     \varphi}{\varphi \wedge \psi
}$
           &
         $\infer[ (\wedge \mathcal{E}2)]{\psi}{\varphi \wedge \psi}$
                \\  & & \\ 
          $ \infer[ (\neg \wedge \mathcal{E}1)]
     {\neg\psi}{\neg (\varphi \wedge \psi) \ \
 \varphi}$
         \  & \ 
           $ \infer[(\neg \wedge \mathcal{E}2)]{
     \neg\varphi}{\neg (\varphi \wedge \psi) \ \
     \psi}$
      &
          \\ & &  \\ 
        $ \infer[ (\neg \neg \mathcal{E})]{\varphi} {\neg \neg \varphi}$
   &
$      \infer[ (\contr \mathcal{E})]
     {\varphi}{\contr}$
  &
\\ & &  \\ 
\hline
\end{tabular}

\caption{The standard propositional Introduction and Elimination rules for $\neg,\wedge,\vee$.}
\label{intel}
\end{table}

\begin{definition}\label{0depthlogic}
For 
$\Gamma\cup \{\f\}\subseteq \fm$, let
 $\Gamma \vdash_0 \f $ if and only if there is a sequence of  sentences $\f_1,\dots,\f_m$ such that $\f_m = \f$ and each sentence $\f_i$ is either in $\Gamma$ or obtained by an application of the rules in Table \ref{intel} on the sentences $\f_j$ with $j<i$. 
 \end{definition}

So whilst $p\wedge q\vdash_0 p$, it is not the case that  $\zinf\vdash_0 p\vee\neg p$.
More generally, it can be shown that, contrary to classical logic,  $\vdash_0$ has no tautologies at all, i.e. for no $\f$, {do we have} $\zinf\vdash_0\f$. 

The remainder of this section illustrates the pivotal role of  0-depth derivation in the construction of  bounded sequences, starting with the definition of \emph{maximal sequences} which pins down the most useful form of uncertainty resolution.

As usual, we say that $\c$ decides $\f$ if either $\c\vdash_0\f$ or $\c\vdash_0 \neg \f$. It is convenient to pin down the elements of a set $\Gamma$ which decide the sentence $\f$,  i.e.  we write \[ \dec{\Gamma}{\f} = \{\a \in\Gamma\mid \a \text{ decides } \f \}.\] This allows us to define maximality and closure in trees and sequences, as follows.

\begin{definition}\label{max}
  Let $\{T_k\}_{k\in\mathbb{N}}$ be a DBT-sequence
and $\f\in \fm, \spc\subseteq \fm $.
\begin{itemize}
\item A tree $T_k$ is said to be
  \emph{$\{\f\}$-maximal} if there exists no tree $T'_k$  
of depth $k$ with the same root, such that $|\dec{Le(T'_k)}{\f}| > |\dec{Le(T_k)}{\f}|$.
\item A tree $T_k$ is said to be
  \emph{$\{\f\}$-closed} if  $\dec{Le(T_k)}{\f} = Le(T_k)$, i.e. all the formulas $\a \in Le(T_k)$ decide $\f$.
\item A sequence $\{T_k\}_{k\in\mathbb{N}}$  is \emph{$\spc$-closed} if, for each $\f\in \spc$ there is a $k\in\mathbb{N}$ such that $T_k$ is  $\{\f\}$-closed.
\item A finite sequence $\{ T_k\}_{k\leq n}$ 
is $\Pi$-maximal(closed) if for each $\f \in \Pi$, there is a $j\leq n$ such that $F_j$ is $\{\f\}$-maximal(closed).
\end{itemize}
\end{definition}

Clearly, closed trees are in particular maximal. While a closed tree represents the ideal case in which  an agent has resolved all the uncertainty relatively to a formula of interest, we can think of maximality as a ``second-best''. For DBT sequences, \paolo{it can be shown} that the two concepts actually coincide in the infinite case (this is the reason why we do not include a notion of maximality for them), while they differ for the finite case. Correspondingly, our approximation results, i.e. Theorem \ref{thm:prob} and Theorem \ref{thm:approxprob} make use of assumptions on closed sequences, while those in Section \ref{sec:complex}, where finite sequences are concerned, are limited to maximal ones. 

\begin{example}\label{ex:maxnomax}
Consider the formula  $(p\lor \neg p) \lor q$, and the following depth $1$ trees:

\Tree[.$\zinf$ [.\ $p$ ] [.\ $\neg p$ ]]    \ \  \ \
\Tree[.$\zinf$ [.\ $q$ ] [.\ $\neg q$ ]]\\

The tree on the left-hand side is $\{(p\lor \neg p) \lor q\}$- closed, for we have both $p \vdash_0(p\lor\neg p) \lor q$ and  $\neg p\vdash_0 (p\lor\neg p) \lor q$. However, it is not the case that  $\neg q$ on the tree on the right-hand-side decides $(p\lor\neg p) \lor q$. Hence the tree on the right-hand-side is not $\{(p\lor \neg p) \lor q\}$-closed, and in particular not maximal.

Note that closeness (let alone maximality) does not imply uniqueness. This is easily shown by taking again the $1$ depth trees above, which are both  $\{(p\lor \neg p) \lor (q\lor\neg q)\}$-closed.

%

\end{example}

So far, we have only used 0-depth consequence, which is only the starting point of the hierarchy of Depth-Bounded Boolean Logics. \paolo{The remaining logics in the hierarchy were defined in \cite{DAgostino2013} as follows.}
\begin{definition}\label{kdepthlogic}
For each $k>0$ and set of sentences $\Gamma\cup \{\f\}\subseteq \fm $, we let
$\Gamma \vdash_k \f$ if and only if there is a $\b\in \sf(\Gamma\cup \{\f\})$ such that $\Gamma, \b\vdash_{k-1} \f $ and  $\Gamma, \neg\b\vdash_{k-1} \f $. 
\end{definition}
We refer to the original works \cite{DAgostino2013,DAgostino2015} for further details on the key properties of Depth-Bounded Boolean Logics which include the following:
\begin{itemize}
\item  $\vdash_0\subset\vdash_1\subset\cdots\subset \vdash_k\subset\cdots$, so the depth-bounded consequence relations form a hierarchy;
\item the limit for $ k\rightarrow \infty$ of $\vdash_k$ is $\vdash$, i.e. in the limit, the hierarchy of DBBLs coincides with classical logic;
\item for each $k$,  $\vdash_k$ has a polynomial-time decision procedure.
\end{itemize}

  The next Proposition shows that $k$-depth logics and maximal trees are closely connected. Before that, let us recall that, as some classical inferences are available only at higher level of the hierarchy of DBBLs, the same holds for recognizing  contradictions. In principle, it might be the case that a piece of information $\a$ is contradictory ($\a\vdash\bot$), while this might not be recognizable at 0-depth ($\a\not\vdash_0\bot $). Ruling out this case leads us to a even tighter connection between $k$-depth trees and the corresponding $k$-depth logic. In the following we will say that a tree $T_k$ is \emph{free of deep contradictions} iff for each $\a\in Le(T_k)$, having $\a\vdash\bot$ implies that $\a\vdash_0 \bot$. 

\begin{proposition}\label{treeandlogic}

Let  $\f \in \fm\cup\{\zinf\},\p\in\fm$. 
If $\f\vdash_k \p$, then the following hold: 
\begin{enumerate}[(a)] 
\item There exists a tree $T_{k}$ of depth $k$ rooted in $\f$, such that $\a\vdash_0\p$, for each $\a\in Le(T_{k})$.  
\item There exists a tree $T_{k}$ of depth $k$ rooted in any sentence $\c\in\fm\cup\{\zinf\}$, such that, for each $\a\in Le(T_{k})$, if $\a\vdash_0\f$, then $\a\vdash_0\p$.  
\item For any $\{\p\}-$maximal tree $T_k$ of depth $k$ rooted in $\f$ and free of deep contradictions, we have that $\a\vdash_0\p$, for each $\a\in Le(T_k)$. 
\item For any $\{\neg\f \lor \p\}-$maximal tree $T_{k+1}$ of depth $k+1$ rooted in a sentence $\c$  and free of deep contradictions, we have that $\a\vdash_0\p$, for each $\a\in Le(T_{k+1})$ such that $\a\vdash_0\f$. 
\end{enumerate}
\end{proposition}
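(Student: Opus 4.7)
The plan is to prove (a) and (b) by induction on $k$, since they are constructive existence claims, and then to derive (c) and (d) from (a) and (b) together with the maximality hypothesis and the freedom from deep contradictions.

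For part (a), the base case $k=0$ is immediate: take $T_0$ with $\f$ as its sole node, noting that $\f\vdash_0\p$ by hypothesis. For the inductive step, the definition of $\vdash_k$ supplies a $\b\in S(\{\f,\p\})$ such that $\f,\b\vdash_{k-1}\p$ and $\f,\neg\b\vdash_{k-1}\p$, which via $\wedge\mathcal{I}$ and $\wedge\mathcal{E}$ is equivalent to $\f\land\b\vdash_{k-1}\p$ and $\f\land\neg\b\vdash_{k-1}\p$. The induction hypothesis yields depth-$(k-1)$ trees rooted at $\f\land\b$ and $\f\land\neg\b$, respectively, whose leaves all $\vdash_0\p$; branching $\f$ into $\f\land\b$ and $\f\land\neg\b$ and gluing these subtrees underneath yields the required $T_k$. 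Part (b) runs the analogous induction with an arbitrary root $\c$: the base case invokes cut for $\vdash_0$ (if $\c\vdash_0\f$ and $\f\vdash_0\p$ then $\c\vdash_0\p$); the inductive step branches $\c$ into $\c\land\b$ and $\c\land\neg\b$ and applies the IH in each branch. Any leaf $\a$ in the $\c\land\b$ subtree satisfies $\a\vdash_0\c\land\b$, so $\a\vdash_0\b$; if additionally $\a\vdash_0\f$, then $\a\vdash_0\f\land\b$ by $\wedge\mathcal{I}$, and the IH delivers $\a\vdash_0\p$.

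For (c), let $T_k$ be $\{\p\}$-maximal rooted in $\f$ and free of deep contradictions, and suppose for contradiction that some leaf $\a\in Le(T_k)$ satisfies $\a\not\vdash_0\p$. Since $\a$ is a conjunction containing $\f$, $\a\vdash_0\f$ and hence $\a\vdash\f$; combined with $\f\vdash\p$ (which follows from $\f\vdash_k\p$ by soundness) this gives $\a\vdash\p$ by classical cut. If $\a\vdash_0\neg\p$, then $\a\vdash\neg\p$ forces $\a\vdash\bot$, whence $\a\vdash_0\bot$ by deep-contradiction-freeness, and so $\a\vdash_0\p$ by $\contr\mathcal{E}$ -- contradicting $\a\not\vdash_0\p$. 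Otherwise $\a$ does not decide $\p$ at all; using the conjuncts introduced on the path to $\a$ together with a depth-accounting argument one shows $\a\vdash_{k-j}\p$ (where $j$ is $\a$'s depth), and by part (a) we obtain a depth-$(k-j)$ subtree rooted at $\a$ all of whose leaves $\vdash_0\p$. Splicing this subtree in place of $\a$ yields a new depth-$k$ tree rooted in $\f$ with strictly more $\p$-deciding leaves than $T_k$, contradicting its $\{\p\}$-maximality.

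Part (d) reduces to (a) applied to $\neg\f\lor\p$: a short deduction-theorem argument shows that $\f\vdash_k\p$ entails $\zinf\vdash_{k+1}\neg\f\lor\p$ (witness $\b:=\f$, using $\vee\mathcal{I}1$ and $\vee\mathcal{I}2$), and (a) (or its obvious analogue for arbitrary roots, obtained by conjoining $\c$ with the branchings of the $\zinf$-rooted tree) then produces a depth-$(k+1)$ tree rooted at $\c$ whose leaves all $\vdash_0\neg\f\lor\p$. For any such leaf $\a$ with $\a\vdash_0\f$, applying $\neg\neg\mathcal{I}$ gives $\a\vdash_0\neg\neg\f$, and then $\vee\mathcal{E}1$ on $\a\vdash_0\neg\f\lor\p$ delivers $\a\vdash_0\p$. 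The main obstacle will be the depth-accounting step in (c): showing that if $T_k$ has a non-deciding leaf $\a$ at depth $j$, then in fact $\a\vdash_{k-j}\p$ and not just $\a\vdash_k\p$ (which follows immediately from cut). Intuitively, the $j$ hypothetical conjuncts introduced on the path to $\a$ already pay for $j$ levels of the depth-$k$ proof of $\f\vdash_k\p$, leaving the remaining $k-j$ levels free for the replacement subtree.
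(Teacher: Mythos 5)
Your parts (a), (b) and (d) take essentially the paper's route: (a) is the paper's recursive construction recast as an induction on $k$; (b) re-roots that tree at $\c$ and uses $\land\mathcal{E}$, $\land\mathcal{I}$ and transitivity of $\vdash_0$ exactly as the paper does; and the logical core of (d) (witness $\b:=\f$ to obtain $\vdash_{k+1}\neg\f\lor\p$, then $\neg\neg\mathcal{I}$ followed by $\lor\mathcal{E}1$ at the leaves) is the paper's argument. Note only that (d) is universally quantified over maximal trees, so it must go through (c) rather than ``reduce to (a)'': producing one good tree is not enough, and the gap below therefore propagates to (d) as you have written it.

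The genuine gap is in case 2 of your proof of (c), the step you yourself flag as ``the main obstacle'' without closing it: the claim that a non-deciding leaf $\a$ at depth $j$ satisfies $\a\vdash_{k-j}\p$. The $j$ conjuncts on the path to $\a$ are whatever the given tree happens to branch on; they need not coincide with (nor $0$-entail) the witnesses occurring in a derivation of $\f\vdash_k\p$, so they do not ``pay for'' $j$ levels of that derivation. Concretely, take $\f=r$ and $\p=r\land(p\lor\neg p)$, so that $r\vdash_1\p$ with witness $p$; a depth-$1$ tree rooted at $r$ branching on $q$ has leaves $r\land q$ and $r\land\neg q$, and $r\land q\not\vdash_0\p$ since no IntElim derivation yields $p\lor\neg p$ without $p$ or $\neg p$ being available. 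So the depth-accounting claim is false for arbitrary trees, and for \emph{maximal} trees it is precisely what is at stake --- you would be assuming the absence of bad leaves while trying to prove it. The repair is the paper's shortcut, which avoids splicing altogether: the tree constructed in (a) is a complete binary tree with all $2^k$ of its leaves deciding $\p$; since any depth-$k$ tree has at most $2^k$ leaves, maximality forces \emph{every} leaf of $T_k$ to decide $\p$, i.e.\ $T_k$ is $\{\p\}$-closed. This eliminates your case 2 entirely, and your case 1 (a leaf with $\a\vdash_0\neg\p$, handled via $\f\land\b\vdash_k\bot$, freedom from deep contradictions and $\contr\mathcal{E}$) is exactly the paper's remaining step.
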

\begin{proof}
  
(a).\paolo{ Let us construct a tree $T_k$ of depth $k$ as follows. The root is $\f$. Now, since $\f\vdash_k \p$, by Definition \ref{kdepthlogic}, there is a $\b$ such that $\f,\b\vdash_{k-1}\p$  and $\f,\neg\b\vdash_{k-1}\p$. We thus branch the root with two nodes $\f\land\b$, $\f\land \neg\b$.
We repeat the procedure as follows: for any $1\leq j< k$, assuming that, for each node $\a$ at depth $j$, we have $\a \vdash_{k-j} \p$, we branch the node $\a$ with the two formulas $\b$ and $\neg \b$ such that $\a,\b\vdash_{k-j-1}$
and $\a,\neg \b \vdash_{k-j-1} \p$ (such a $\b$ exists, by Definition \ref{kdepthlogic}). }

\paolo{ Proceeding by induction, we then obtain that any node $\a$ at depth $0\leq j\leq k$ in the tree thus constructed is such that $\a\vdash_{k-j} \p$; hence, in particular,  for any leaf $\a\in Le(T_k)$ (i.e. formula at depth k), $\a\vdash_0\p$.}

(b) \paolo{We consider the same tree as in (a), but with root $\c$, instead of $\f$. Let us write any leaf in this tree as $\gamma\land \b$ and any leaf in the tree in (a) as $\f\land\b$. By (a), we have that  $\f\land\b\vdash_0\p$ for any $\b$. Whenever  $\gamma\land\b\vdash_0\f$, since by $(\land\mathcal{E}) $, $\gamma\land\b\vdash_0\b$, we also have $\gamma\land\b\vdash_0\f\land\b$ (by $\land\mathcal{I}$). Hence, by the transitivity of $\vdash_0$, we obtain our claim, namely that $\gamma\land\b\vdash_0\p$, whenever $\gamma\land\b\vdash_0\f$.}

(c) \paolo{The tree built in (a) is such that  $\a \vdash_0 \p$ for each $\a\in Le(T_k)$, hence it is a $\{\p\}$-closed tree. Any other  $\{\p\}$-maximal tree $T'_k$ will thus also be in particular $\{\p\}$-closed.
Consider now a $\{\p\}$-closed tree $T'_k$ which is free of deep-contradictions: each leaf will decide $\p$. Assume that, for some leaf, say $\f\land\b \in Le(T'_k)$, it holds that $\f\land \b \vdash_0\neg\p$.
 This means in particular that $\f\land\b \vdash_k \neg \p$. But then, since $\f\vdash_k\p$, we also have $\f\land\b\vdash_k\p$. Hence, $\f\land\b\vdash_k\bot$, and by the assumption that $T'_k$ is free of deep contradictions (see Definition \ref{max}), we will have $\f\land\b\vdash_0 \bot$. Hence, by rule ($\bot\mathcal{E}$), we have again $\f\land\b\vdash_0\p$. This finally shows that each leaf in a $\{\p\}$-maximal tree derives at 0-depth $\p$.}

(d)\paolo{Assume $\f\vdash_k\p$.  
Note that $\neg \f \vdash_0 \neg \f \lor \p$ by the rule ($\lor \mathcal{I}1$),  hence in particular $\neg \f \vdash_{k} \neg \f\lor \p$.
 On the other hand, since $\f\vdash_k\p$ , again by the rule ($\lor \mathcal{I}1$), we get $\f\vdash_{k}\neg\f \lor \p$. This shows, by Definition \ref{kdepthlogic} that $\vdash_{k+1}\neg \f \lor \p$. 
Applying (c), we obtain that for any $\{\neg\f\lor \p\}$-maximal tree $T_{k+1}$ rooted in $\c$, we have $\a\vdash_0\neg\f\lor\p$ for each $\a \in Le(T_{k+1})$. 
 Now, let $\a \in Le(T_{k+1})$ be such that $\a\vdash_0\f$. By the rule $\neg\neg\mathcal{I}$ applied to $\f$, we obtain $\a\vdash_0\neg\neg \f$. Recalling that $\a\vdash_0\neg\f\lor\p$, applying the rule $\lor\mathcal{E}1$ to $\neg\f\lor\p$ and $\neg\neg\f$, we finally obtain $\a\vdash_0 \p$. }
 
  \end{proof}
\section{Approximating Probability Functions}\label{sec:approximating-probability-functions}

We now turn to our main concern, namely showing that finitely additive probability functions  $P\colon \fm \to [0,1]$  can be approximated by sequences of uncertainty measures which we will call \emph{k-depth belief functions}.

Depth-bounded trees capture the hypothetical uncertainty resolution of \emph{one} particular piece of information. However  the probabilistic quantification of uncertainty typically involve the simultaneous hypothetical exploration of more pieces of information.

Suppose, for instance, that an agent is informed that half of the balls in a urn are either yellow or green $(Y \lor G)$, and the remaining half are either red or white ($R\lor W$). Here, if an agent is willing to assign probabilities to the propositional variables $Y,G,R,W$, she needs to simulate information that she does not possess, concerning the piece of information $Y\lor G$ on the one hand, and $R\lor W$, on the other. For instance, on the basis of the symmetry of the problem, she might assign equal weights to $R$ and $W$, and to $Y$ and $G$. This would result in a probabilistic assignment $P$ such that $P(Y)=P(G)=P(R)=P(W)=1/4$. Once more, let us stress that the assignment $P(Y)=P(G)=P(R)=P(W)=1/4$ is different in nature from the information $P(Y\lor G) = P(R\lor W)= 1/2$: the former was arrived at by the agent, by taking into account different hypothetical scenarios and weighting them, while the latter was information actually provided to her from the start. This setup will be formally addressed, using our framework, in Example \ref{ellsbergvariant}.
  
The information actually held by an agent, as  that on $Y\lor G$ and $R\lor W$, in the above example, will be collected in the \emph{support set}, which we denote 
by $\init$. 
\paolo{By this we do not mean that an agent possesses information on whether the sentences in $\init$ are the case, but rather that only on these sentences the agent can  initially quantify her uncertainty probabilistically. 
}
 Each formula of $\init$ will act then as the root of a DBT-sequence, aimed at resolving the uncertainty therein contained.

With this motivation we now generalise the framework of Section \ref{sec:bounded} from sequences of depth-bounded trees to \emph{depth-bounded forests}. To do this, it will be convenient to assume that $\init$ is nonempty and contains at least one consistent sentence. As before, the symbol $\zinf$  will be used to denote the special case in which the agent holds no initial information at all, i.e. we let in that case $\init =\{\zinf\}$. In the following $inc(\Gamma)$ will denote the set $\{\a\in S\mid \a\vdash_0 \bot\}$ for any $\Gamma\subseteq \fm$. Unless otherwise stated, we shall thus have  $inc(\init)\subset \init \subseteq \fm \cup\{\zinf\}$.

\begin{definition}\label{Fk}
For each $\c \in \init$, let  $\{T_k(\c)\}_{k\in\mathbb{N}}$  be a DBT-sequence  rooted in $\c$. Then:
\begin{itemize}
\item A \emph{depth-bounded forest} sequence (DBF-sequence, for short) 
  ${(F_k)}_{k\in\mathbb{N}}$ based on $\init$ is obtained by letting
  $F_k = \cup_{\c\in\init} T_k(\c)$, for each $k\in\mathbb{N}$;

\item  A \emph{ finite depth-bounded forest sequence} is any initial
  segment ${(F_k)}_{k\leq n}$ of a depth-bounded sequence
  ${(F_k)}_{k\in\mathbb{N}}$, for some $n\in \mathbb{N}$.
\end{itemize}
\end{definition}

The notion of maximality and closure (see Definition \ref{max} above) adapts immediately to forests as follows.
\begin{definition}\label{maxf}
For $\f\in\fm, \spc\subseteq \fm $ we say that  
\begin{itemize}
\item a forest $F_k$ based on $\init$ is \emph{$\{\f\}$-closed} (maximal) if each tree $T_k(\gamma)$ in $F_k$ is $\{\f\}$-closed(maximal) for each $\c \in \init$.
\item a DBF-sequence $\{F_k\}_{k\in\mathbb{N}}$ based on $\init$ is \emph{$\spc$-closed} if each DBT-sequence $\{T_k(\c)\}_{k\in\mathbb{N}}$ in $\{F_k\}_{k\in\mathbb{N}}$ 
is $\spc$-closed, for each $\c\in\init$. 
\item \paolo{ a finite DBF-sequence $\{ F_k\}_{k\leq n}$ based on $\init$ is \emph{$\spc$-closed} (maximal) if each finite DBT-sequence $\{ T_k(\c)\}_{k\leq n}$ is $\spc$-closed (maximal), for each $\c\in\init$ .}
\end{itemize}
\end{definition}

We can now start working our way up to the approximation of probability functions. 
  \begin{definition}[Probability mass function]\label{m0}
For $\Gamma \subseteq \fm \cup \{\zinf\}$, say that  
$m\colon \mathcal{P}(\Gamma)  \to [0,1]$ is a \emph{probability mass function over $\Gamma$} if $m$ is a finitely additive measure, such that $m(\Gamma)=1$, $m(\emptyset) =0$, and $m(inc(\Gamma))=0$. 
\end{definition}

Note that in the interest of readability we write $m(\a)$ in place of $m(\{\a\})$, unless confusion may arise. We shall also refer to probability mass functions simply as ``mass functions''. 
\paolo{When applied to the set $\init$, the mass function defined above captures, in analogy with Dempster-Shafer Belief functions, the uncontroversial probabilistic quantification of uncertainty which is licensed by the actual information possessed by the agent.} While in Shafer's theory \cite{Shafer1990} agents can fill in potential gaps in their ``evidence'' by using judgment, the key idea of our proposal is to provide a logical framework to constrain this judgment formation. In particular, depth-bounded forests provide a principled way to allow agents to go beyond the information they actually possess, by exploring hypotheses for the resolution of uncertainty. The characteristic feature of our construction is that this simulation will be bounded by a \emph{fixed} sequence. 

\begin{definition}[DBM-sequence]\label{quantsequence} 
Let $\{F_k\}_{k\in\mathbb{N}}$ be a DBF-sequence based on $\init$, which is free of deep contradictions. We say that $\F = (F_k,m_k)_{k\in\mathbb{N}} $ is a \emph{depth-bounded
 mass sequence} (DBM-sequence for short) if
 each $m_k$ is a mass function over $Le(F_k)$, and for each $k >0 $, and leaf $\a \in Le(F_{k-1})$, the following holds:
 \begin{itemize}
 \item  $m_k(\a) =m_{k-1}(\a)$, if $\a\in Le(F_{k})$.
 \item $m_k(\{\a\land\b,\a\land\neg\b \}) =  m_{k-1}(\a)$, if $\a$ has children nodes  $\a\land\b$, $\a\land\neg\b$.
\end{itemize} 
\end{definition}

The condition on deep contradictions excludes the possibility that sentences might result consistent at depth $k-1$, while turning out inconsistent at depth $k$ (see \cite{Baldi2020,DAgostino2013}). If this were the case, the inconsistent sentences would have to be assigned mass 0, and their original mass should be redistributed among the consistent formulas. We do not consider that more general case, since it would complicate our definitions, without any major conceptual gains for our main results.

Note that the initial mass function $m_0$ is defined over $Le(F_0)$, which is equal to the initial support set $\init$. \paolo{The agent further refines the quantification by distributing the mass assigned to each leaf at depth $k$, among its children nodes, which are the leaves at depth $m_{k+1}$}. The leaves of depth-bounded forests for $k>0$ keep track of the information that has been probabilistically quantified by the agent. 

To extend this to an arbitrary sentence $\f$, the agent looks at the set of sentences whose uncertainty has been assessed (the leaves) and that allows her to 0-derive $\f$. Dually, the agent can also consider the set of sentences whose uncertainty has been assessed (the leaves) and that do not allow a 0-derivation of $\neg\f$. This is again reminiscent of the construction of Dempster-Shafer Belief functions.

Indeed, for any sentence $\f \in \fm$, given a DBF-sequence $\{F_k\}_{k\in \mathbb{N}}$ of depth-bounded forests, the set
\[b_{k}(\f) = \{\a \in  Le(F_k)  \mid \a \vdash_0 \f,  \
\a\not\vdash_0\bot\} 
\]
captures the belief mass that an agent commits to a sentence $\f$ \emph{after} simulating uncertainty resolution of depth $k$ whereas
 \[ pl_{k}(\f) = \{\a \in  Le(F_k)  \mid \a \not\vdash_0 \neg\f \}\]
similarly captures $k$-depth plausibility. 
This is the sense in which our construction puts the judgment component of Belief functions under logical constraints.

\begin{definition}\label{bk} 
 Let $(F_k,m_k)_{k\in\mathbb{N}}$ be a  DBM-sequence. The \emph{$k$-depth Belief function $B_k$} and the
  \emph{$k$-depth Plausibility function $Pl_k$} are defined by letting:
$$ B_k(\f) = m_k(b_k(\f)) \text{ and  }  Pl_k(\f) =  m_k(pl_k(\f)),$$ respectively.

\end{definition}
Let us now put our framework to work through a classic example. Though anticipated by \cite{Keynes1921,Knight1921} the example has gained prominence in the analysis of \cite{Ellsberg1961}, who used it to argue normatively against the adequacy of the probabilistic representation of uncertainty in the face of partly resolving uncertainty, or ambiguity. In addition to illustrating the basic mechanism of our framework, the example also sheds  new light on how the logical framework may contribute to the analysis of foundational issues in decision theory.

\begin{example}[Ellsberg's ``one urn'' problem]\label{ellsberg}
Let  $\L$ be a finite language with the propositional variables $\{Y,G,R\}$, interpreted as {\em the ball is Yellow, Green, Red}, respectively. The problem is that of a single draw from an urn and the ``Laplacian information'' is to the effect that 2/3 of the balls are either yellow or green with the remaining 1/3 being red. This information is logically represented by the conjunction $\c$ of the sentences in the set \[\{  Y\to (\neg G \land \neg R),  R\to (\neg G \land \neg Y), G \to (\neg R \land \neg Y), Y\lor G\lor R\}.\]

The  problem clearly suggests letting
$ \init = \{(Y\lor G) \land \c, R\land \c \}$.

The first step in our model now requires us to quantify our uncertainty based only on actual information, which is readily done by letting $m_0((Y\lor G)\land \c) = 2/3$ and $ m_0(R \land \c) =1/3$.  
This implies 
\begin{eqnarray*}
B_0(Y\lor G) =& 2/3\\
B_0(R)=&1/3\\
 B_0(\c) =& 1\\
B_0(Y)=B_0(G)=&0.  
\end{eqnarray*}

Since the interesting part of this problem is the process of analyzing the initial information and assigning probabilities to its sentences, it is natural to restrict the attention to sequences which are $S(\init)$-maximal. One can easily check that, for the node $R\land \c$ in $\init$, we already have $R\land \c\vdash_0\neg Y$,  $ R\land \c\vdash_0\neg G$ and  $ R\land \c\vdash_0 R$. The node thus decides  $S(\init)$, and  needs not be expanded at depth 1.  
However,  at depth 1 the node $(Y\lor R)\land\c$ will be suitably expanded. Note that there are only two possible $S({\init})$-maximal forests at this depth:
\begin{center}
\Tree[.$(Y\lor G)\land\c$ [.$Y$ ] [.$\neg Y$ ]] \ \ 
\Tree[.$R\land\c$ ] 
\end{center}
and 
\begin{center}
\Tree[.$(Y\lor G)\land \c$ [.$G$ ] [.$\neg G$ ]] \ \ 
\Tree[.$R\land\c$ ] 
\end{center}
In both cases all the leaves of the forest at depth 1 decide $S(\init)$, hence the forest is $S({\init})$-closed, and in particular maximal. \paolo{ At this depth, the symmetry of the available,  mutually exclusive, options may lead the agent to assign
$ m_1((Y\lor G)\land\c \land Y)) =m_1((Y\lor G)\land \c\land \neg Y)) = 1/3 $, though this is not the unique available option)}. This in turn yields
\begin{eqnarray*}
  B_1(Y)=&1/3\\
  B_1(G)=& 1/3\\
  B_1(R)=&1/3.
\end{eqnarray*}
Note that this is now compatible with the probabilistic quantification of uncertainty, which can be seen to have been obtained from information actually held by the agent, via a single  iteration of hypothetical uncertainty resolution.

\end{example}

\begin{example}\label{ellsbergvariant}
 The  variant of Ellsberg's problem sketched at the beginning of this Section arises by considering  the propositional variables $\{Y,G,R,W\}$, a sentence $\c$ representing background information, and taking
 $ \init = \{(Y\lor G) \land \c, (W\lor R)\land \c \}$.
 
 The quantification of uncertainty based on the information actually possessed by the agent is  rendered by letting
 $$m_0((Y\lor G)\land \c) = 1/2$$ and $$ m_0( (W\lor R) \land \c) =1/2.$$ 
This implies 
\begin{eqnarray*}
B_0(Y\lor G) =& 1/2\\
B_0(W\lor R)=&1/2\\
 B_0(\c) =& 1\\
B_0(Y)=B_0(G)=B_0(W)=B_0(R)=&0.  
\end{eqnarray*}

At depth 1 the nodes $(Y\lor G)\land\c$ and $(W\lor R)\land\c$ will be suitably expanded, for instance as follows:
\begin{center}
\Tree[.$(Y\lor G)\land\c$ [.$Y$ ] [.$\neg Y$ ]] \ \ 
\Tree[.$(W\lor R)\land\c$ [.$W$ ] [.$\neg W$ ]] \ \ 
\end{center}
At this point, at depth 1 all the leaves of the forest decide $S(\init)$. Again, by the symmetry of the problem, we might consider 
$ m_1((Y\lor G)\land\c \land Y)) =m_1((Y\lor G)\land \c\land \neg Y)) =  1/4 $ and
$ m_1((W\lor R)\land\c \land W)) =m_1((W\lor R)\land \c\land \neg W)) =  1/4 $ which in turn yields:
\begin{eqnarray*}
  B_1(Y)=&1/4\\
  B_1(G)=& 1/4\\
  B_1(R)=&1/4\\
  B_1(W)=&1/4.
\end{eqnarray*}
\end{example}

According to the main result of this Section, and indeed of the paper, $k$-depth Belief functions are shown to approximate their additive counterparts. In preparation to it we will illustrate the close relation between $k$-depth Belief functions and $k$-depth deductions.  We begin by adapting to the present framework Proposition 1 of \cite{Baldi2020}. The main difference between the two formalisms is that the $k$-depth belief functions here are based on a \emph{fixed} choice of hypothetical information, encoded in the underlying DBF-sequence, while those in \cite{Baldi2020}  may take into account any such choices at once. This owes to the rather different research questions which motivate the two papers. However, the two formalisms {are} 
 reconciled by adding suitable maximality assumptions to the DBF-sequence defining our functions.

\begin{proposition}\label{bkprop}
Let $\F = (F_k,m_k)_{k\in\mathbb{N}}$ be a DBM-sequence based on $\init\subseteq \fm \cup\{\zinf\}$. The following hold:
\begin{enumerate}
\item   $\vdash_k \f$  implies $B_k(\f)=1$, for any $F_k$ which is $\{\f\}$-maximal and free of deep contradictions. 
\item $\vdash_k \neg \f$  implies $B_k(\f)=0$, for any $F_k$ that is  $\{\f\}$-maximal  and free of deep contradictions. 
\item  $\f \vdash_k \p$  implies 
\begin{itemize}
\item  $B_{k}(\f) \leq B_{k}(\p)$, for some $F_k$. 
\item  $B_{k+1}(\f)\leq B_{k+1}(\p)$, for any $F_k$ which is $\{\neg \f \lor \p\}$-maximal and free of deep contradictions.
\end{itemize} 
\item$\displaystyle B_k( \bigvee_{i=1}^n \f_i ) \geq  \sum_{\emptyset \neq S\subseteq\{1,\dots,n\}} (-1)^{|S|-1} B_k(\bigwedge_{i\in S} \f_i)$ for each $\f_1,\dots,\f_n\in \fm$. 
\end{enumerate}
\end{proposition}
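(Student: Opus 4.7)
My plan is to handle each of the four parts by exploiting the correspondence established in Proposition~\ref{treeandlogic} between $\vdash_k$-derivability and 0-depth derivability at the leaves of a suitably maximal forest, together with finite additivity of $m_k$ and the fact (Definition~\ref{m0}) that inconsistent leaves receive zero mass.

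For (1), I would first observe that $\vdash_k$ is monotonic, by a straightforward induction on $k$ (it is inherited from $\vdash_0$, whose calculus is purely introduction/elimination-based and so admits free weakening on the left), so $\vdash_k\f$ lifts to $\c\vdash_k\f$ for every root $\c\in\init$. Proposition~\ref{treeandlogic}(c), applied to each tree $T_k(\c)$ of the $\{\f\}$-maximal forest $F_k$, then ensures $\a\vdash_0\f$ at every leaf $\a$, whence $B_k(\f)=m_k(b_k(\f))=m_k(Le(F_k))=1$. Part (2) is the mirror image: $\{\f\}$- and $\{\neg\f\}$-maximality coincide, since deciding $\f$ is exactly deciding $\neg\f$, so the same argument with $\p:=\neg\f$ yields $\a\vdash_0\neg\f$ at every leaf; a consistent leaf cannot also derive $\f$ (else $(\contr\mathcal{I})$ would give $\a\vdash_0\contr$), so $b_k(\f)$ is confined to inconsistent leaves and hence $B_k(\f)=0$.

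For (3a), I invoke Proposition~\ref{treeandlogic}(b) at each root $\c\in\init$ to build a tree whose leaves satisfy the implication $\a\vdash_0\f\Rightarrow\a\vdash_0\p$; any DBM-sequence built around the resulting forest $F_k$ then has $b_k(\f)\subseteq b_k(\p)$, and monotonicity of the measure $m_k$ delivers $B_k(\f)\leq B_k(\p)$. Part (3b) is analogous but uses Proposition~\ref{treeandlogic}(d), which pertains to depth $k+1$ (so I read the stated hypothesis as being on $F_{k+1}$, to match the subscript of $B_{k+1}$): the same leaf-wise inclusion $b_{k+1}(\f)\subseteq b_{k+1}(\p)$ then holds for \emph{every} $\{\neg\f\lor\p\}$-maximal $F_{k+1}$ free of deep contradictions, giving the claim.

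Part (4) is the most substantive and is a finite-additive analogue of the classical super-additivity of Belief functions. The key set-theoretic observations are that $(\land\mathcal{I})$ together with $(\land\mathcal{E})$ give the equality $b_k(\bigwedge_{i\in S}\f_i)=\bigcap_{i\in S}b_k(\f_i)$ for every nonempty $S\subseteq\{1,\dots,n\}$, while $(\lor\mathcal{I})$ gives only the inclusion $\bigcup_i b_k(\f_i)\subseteq b_k(\bigvee_i\f_i)$. Combining these with the standard inclusion-exclusion identity for the finitely additive $m_k$ (derived in the routine way by decomposing the union into the disjoint atoms of the finite Boolean algebra generated by the $b_k(\f_i)$ and reshuffling), I obtain
\begin{align*}
B_k\Bigl(\bigvee_{i=1}^n\f_i\Bigr)
&\ \geq\ m_k\Bigl(\bigcup_{i=1}^n b_k(\f_i)\Bigr)
\ =\ \sum_{\emptyset\neq S\subseteq\{1,\dots,n\}}(-1)^{|S|-1}\,m_k\Bigl(\bigcap_{i\in S}b_k(\f_i)\Bigr) \\
&\ =\ \sum_{\emptyset\neq S\subseteq\{1,\dots,n\}}(-1)^{|S|-1}\,B_k\Bigl(\bigwedge_{i\in S}\f_i\Bigr).
\end{align*}
The main obstacle, and also the conceptual heart of (4), is that the reverse $\lor$-inclusion fails at 0-depth: this is precisely the logical origin of the non-additivity of $B_k$ and of the inequality (rather than equality) in the super-additivity statement.
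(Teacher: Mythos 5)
Your proof is correct and follows essentially the same route as the paper's, which simply cites Proposition~\ref{treeandlogic} for items (1)--(3) and defers item (4) to Proposition~1 of \cite{Baldi2020}; you merely supply the details the paper leaves implicit (the monotonicity of $\vdash_k$, the leaf-wise inclusions $b_k(\f)\subseteq b_k(\p)$, and the inclusion--exclusion computation for $m_k$). Your reading of the hypothesis in (3b) as bearing on $F_{k+1}$, and your routing of (1)--(2) through part (c) and of (3a) through part (b) of Proposition~\ref{treeandlogic}, are sensible resolutions of the paper's slightly loose attributions and do not change the substance of the argument.
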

\begin{proof}
(1) and (2) are immediate consequences of Proposition \ref{treeandlogic} (b). The first item of (3) is an immediate consequence of Proposition \ref{treeandlogic}(c), and the second of Proposition \ref{treeandlogic}(d)

(4). By a straightforward adaptation of Proposition 1 in \cite{Baldi2020}. 
\end{proof}

\begin{remark}\label{rem:maximality}
 Proposition \ref{bkprop} makes explicit that suitable maximality assumptions
  play a fundamental role in establishing the close connection between
  $k$-depth Belief functions of Definition \ref{bk} and depth-bounded
  logics. The assumption requires that the agent chooses the \lq\lq right" pieces of hypothetical information: i.e. those which decide the sentence of interest in as many cases (i.e. leaves of the forest) as possible. This condition makes our results normatively interpretable: once the limitation of the inferential ability of the agent are taken into account (the depth), we are still requiring that her belief follows constraints dictated by the corresponding ($k$-depth) logic.
\end{remark}

We will now show that, under suitable assumptions, DBM-sequences determine a probability, in the limit.
 We begin with a lemma adapted from \cite{Baldi2021}, where it was originally presented in a qualitative setting.


\begin{lemma}\label{threshold}
\paolo{Let $\F = (F_k,m_k)_{k\in\mathbb{N}}$ be a DBM-sequence which is free of deep contradictions. 
For any formula $\f$, if there is a $j\in\mathbb{N}$ such that $F_j$ is $\{\f\}$-closed, then  $B_k(\f)= B_j(\f)$ for any $k\geq j$.}
\end{lemma}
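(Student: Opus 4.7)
My plan is to proceed by induction on $k - j$, with the base case $k=j$ trivial, and leverage the fact that the $\{\varphi\}$-closedness of $F_j$ forces a clean partition of the leaves of $F_j$ into those that $0$-derive $\varphi$, those that $0$-derive $\neg\varphi$, and those that are $0$-inconsistent (which receive mass $0$ by Definition \ref{m0}). The central observation is that mass is preserved along descendants in a DBM-sequence, and that $0$-derivation of $\varphi$ is preserved when one descends in the tree via conjunctive branching.

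First, I would set up notation: for each $\alpha \in Le(F_j)$, let $D_k(\alpha) \subseteq Le(F_k)$ denote the set of descendants of $\alpha$ appearing as leaves in $F_k$ (including $\alpha$ itself if it was never expanded between step $j$ and step $k$). The two clauses of Definition \ref{quantsequence}, applied iteratively, yield
\[
\sum_{\beta \in D_k(\alpha)} m_k(\beta) = m_j(\alpha),
\]
and the sets $\{D_k(\alpha)\}_{\alpha \in Le(F_j)}$ form a partition of $Le(F_k)$. Next, each descendant $\beta \in D_k(\alpha)$ is syntactically of the form $\alpha \wedge \gamma$ for some conjunction $\gamma$ of hypotheses introduced by branching, so by repeated applications of $(\wedge\mathcal{E})$ in Table \ref{intel} we get $\beta \vdash_0 \alpha$. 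Hence by transitivity of $\vdash_0$: if $\alpha \vdash_0 \varphi$ then $\beta \vdash_0 \varphi$, and if $\alpha \vdash_0 \neg\varphi$ then $\beta \vdash_0 \neg\varphi$.

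Now I would use the $\{\varphi\}$-closedness of $F_j$ to split $Le(F_j)$ into $Y_j = b_j(\varphi)$, $N_j = \{\alpha \in Le(F_j) : \alpha \vdash_0 \neg\varphi,\ \alpha \not\vdash_0 \bot\}$, and $I_j = \{\alpha \in Le(F_j) : \alpha \vdash_0 \bot\}$, noting $m_j(I_j) = 0$. For $\alpha \in Y_j$, every consistent descendant $\beta \in D_k(\alpha)$ lies in $b_k(\varphi)$, and any inconsistent descendants contribute mass $0$ to $m_k$ by Definition \ref{m0}; hence $m_k(D_k(\alpha) \cap b_k(\varphi)) = m_j(\alpha)$. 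For $\alpha \in N_j$, every consistent descendant derives $\neg\varphi$ and so (being consistent) cannot also derive $\varphi$, placing it outside $b_k(\varphi)$. For $\alpha \in I_j$, all descendants have combined mass $m_j(\alpha)=0$.

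Summing over the partition gives $B_k(\varphi) = m_k(b_k(\varphi)) = \sum_{\alpha \in Y_j} m_j(\alpha) = m_j(Y_j) = B_j(\varphi)$, completing the proof. The only delicate point is ensuring that the consistency predicate $\not\vdash_0 \bot$ used in $b_k(\varphi)$ plays nicely with the mass axiom $m_k(inc(\cdot))=0$; this is precisely where the hypothesis that $\F$ is free of deep contradictions comes in, guaranteeing that inconsistency at deeper levels is already detected at $0$-depth, so no mass ``leaks'' into the contradictory side of $b_k$ that could not be accounted for at level $j$.
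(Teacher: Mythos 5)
Your proof is correct and follows essentially the same route as the paper's: propagate $0$-derivability of $\varphi$ (resp.\ $\neg\varphi$) from each leaf of $F_j$ down to its descendants via $(\wedge\mathcal{E})$ and transitivity of $\vdash_0$, then combine the resulting description of $b_k(\varphi)$ with the mass-preservation clauses of Definition \ref{quantsequence}. Your explicit three-way partition of $Le(F_j)$ and the bookkeeping for $0$-inconsistent descendants only make more explicit what the paper's proof leaves implicit.
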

\begin{proof}
Let $F_j$ be $\{\f\}$-closed, i.e. assume that each $\a\in Le(F_j)$ decides $\f$. Let us focus the set of leaves $\a$ such that $\a\vdash_0\f$, i.e. the set $b_j(\a)$. Now, for any $k\geq j$, the leaves which are descendants of $\a$ will be of the form $\a\land\b$. Hence, by applying rule $\land\mathcal{E}$ in Table \ref{intel}, and using the transitvity of $\vdash_0$  we will have $\a\land\b\vdash_0 \f$, for each such $\a\land\b$. 
A similar argument goes through for those remaining $\a\in Le(F_j)$ such that $\a\vdash\neg \f $. 
We get thus that $b_k (\f) = \{\a\land \b \in Le(F_k)\mid \a\vdash_0\f \}$. 
Moreover, since $\F$ is free of deep contradictions, we have that
$m_j(\a) = m_k\{\a\land\b \mid \a\land \b \text{ descendant of } \a\in Le(F_j)\}$ (see Definition \ref{quantsequence}) for each $\a\in b_j(\f)$ .
Hence, we obtain $m_k(b_k(\f)) = m_j(b_j(\f))$, and thus 
$B_k(\f) = B_j(\f)$ for each $k\geq j$.

\end{proof}

{We can now state and prove our main results. First, we show that
\paolo{ (additive) probability functions can be approximated by $k$-depth Belief functions. Conversely we identify the conditions under which $k$-depth Belief functions give rise to probability functions in the limit.}

 \begin{theorem}\label{thm:approxprob}
Let $P\colon \fm \to [0,1]$ be a probability function. Then there is a DBM-sequence $\F$  based on $\init = \{ \zinf\}$ such that $$P(\f) = \displaystyle\lim_{k\to\infty} B_k(\f),$$
for any $\varphi\in\fm$.
\end{theorem}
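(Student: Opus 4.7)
The plan is to exploit the finiteness of $\lang$: a probability function on $\fm$ is determined by its values on the $2^n$ atoms of $At_\lang$, so it suffices to produce a DBF-sequence whose leaves eventually coincide with $At_\lang$ and to feed the values of $P$ directly into the mass function. In fact, this strategy yields equality $B_k(\f)=P(\f)$ from a finite depth onward, so the limit stated in the theorem is literally reached.

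Enumerate the propositional variables of $\lang$ as $p_1,\ldots,p_n$. Since $\init=\{\zinf\}$, a DBF-sequence is just a single DBT-sequence $\{F_k\}_{k\in\mathbb{N}}$ rooted at $\zinf$. I set $F_0$ to be the single node $\zinf$; for $1\le k\le n$, I obtain $F_k$ from $F_{k-1}$ by branching \emph{every} leaf $\a$ of depth $k-1$ with children $\a\land p_k$ and $\a\land\neg p_k$. (For the root $\zinf$ this produces the two children $p_1$ and $\neg p_1$, as prescribed after Definition \ref{Tk}.) A routine induction shows that $Le(F_k)$ consists of the $2^k$ conjunctions of literals drawn from $\{p_1,\neg p_1,\ldots,p_k,\neg p_k\}$, one per partial truth assignment; in particular, $Le(F_n)=At_\lang$. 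For $k>n$ I extend the sequence trivially: fix any atom $a^*$ and branch it with $\b$ equal to a literal occurring in $a^*$, so that one child is classically equivalent to $a^*$ while the other is $0$-depth inconsistent (by $\land\mathcal{E}$ and $\bot\mathcal{I}$). The whole sequence is therefore free of deep contradictions.

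For the masses, set $m_0(\zinf)=1$, and for $1\le k\le n$ put
\[ m_k(\a) \;=\; \sum_{a\in At_\lang,\ a\vdash \a} P(a) \]
for each leaf $\a$ of $F_k$; in particular $m_n(a)=P(a)$ for each atom. The recursive mass-splitting condition of Definition \ref{quantsequence} follows from the disjoint decomposition $\{a : a\vdash\a\}=\{a : a\vdash\a\land p_k\}\cup\{a : a\vdash\a\land\neg p_k\}$ together with the finite additivity of $P$. For $k>n$, assign $0$ to the newly introduced inconsistent leaf and keep the mass of every remaining atom unchanged; the required conditions continue to hold, so $(F_k,m_k)_{k\in\mathbb{N}}$ is a bona fide DBM-sequence.

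The last step is to show that $B_k(\f)=P(\f)$ for every $k\ge n$ and every $\f\in\fm$. The key observation is that atoms decide every formula at $0$-depth, in the sense that for every $a\in At_\lang$ and $\f\in\fm$ one has $a\vdash_0\f$ iff $a\vdash\f$ (and dually for $\neg\f$). This is a direct structural induction on $\f$, using only the IntElim rules of Table \ref{intel}; atoms are never $0$-depth inconsistent, so no atom is discarded by the side condition $\a\not\vdash_0\bot$ in the definition of $b_k(\f)$. Consequently, for every $k\ge n$,
\[ B_k(\f) \;=\; \sum_{a\in Le(F_k)\cap At_\lang,\ a\vdash_0\f} m_k(a) \;=\; \sum_{a\in At_\lang,\ a\vdash \f} P(a) \;=\; P(\f), \]
and $\lim_{k\to\infty}B_k(\f)=P(\f)$ follows immediately. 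The only nontrivial step is the atom-decidability lemma for $\vdash_0$, which is a small case-by-case induction; all remaining verifications are routine bookkeeping.
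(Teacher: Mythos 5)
Your proof is correct and follows essentially the same route as the paper's: build the DBT-sequence rooted at $\zinf$ whose depth-$n$ leaves are exactly $\at_\lang$, set $m_n(a)=P(a)$ (with the earlier masses forced by Definition \ref{quantsequence}), and use the fact that atoms decide every sentence at $0$-depth to get $B_n(\f)=P(\f)$. The only difference is bookkeeping: you handle the stability for $k>n$ by explicitly continuing the branching with a literal of an atom and preserving masses, where the paper simply invokes Lemma \ref{threshold} on the $\{\f\}$-closed forest $F_n$.
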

Something indeed quite stronger will be proved, namely that there is an $n\in\mathbb{N}$ such that $P(\f)= B_n(\f)$. Recall that we denote by $\at_\lang$ the set of  maximally (classical) consistent conjunctions of literals from a language $\mathcal L$ with $n$ propositional variables. 
\begin{proof}
Define a DBM-sequence $\mathcal{F} =
(F_k,m_k)_{k\in\mathbb{N}}$ based on $\zinf$ such that $Le(F_n)$ coincides with $\at_\lang$ and  $m_n(\a) = P(\a)$ for each $\a \in \at_\lang = Le(F_n)$. Note that once we fix $m_n$, Definition \ref{quantsequence} forces us to uniquely determine all the $m_k$ for $k< n$. Moreover, it can be easily shown that, if $\a\in At_\lang$,  then for any $\f\in \fm$, we have that $\a\vdash\f$ iff $\a\vdash_0\f$. 
Hence for any $\f\in \fm$ \[P(\f) = \sum_{\substack{ 
\a  \in  At_\lang \\ \a \vdash \f }} P(\a) = \sum_{\substack{ 
\a  \in  At_\lang \\ \a \vdash_0 \f }} P(\a) = m_n(b_n(\f))  = B_n(\f).
\] 
Moreover, at  depth $n$, the uncertainty about all the propositional variables in $\lang$ will have been resolved and $F_n$ will be $\{\f\}$-closed. 
Hence, by Lemma \ref{threshold} we will have $B_k(\f)=B_n(\f) = P(\f)$ for any $k\geq n$, as required.\end{proof}
\begin{remark}
Note that the sequence constructed in the proof above is in particular an $\fm$-closed sequence, which at a certain depth decides all the propositional variables, hence all the sentences, in our finite language.
The argument extends also to a countable language. To see this, note that by Lemma \ref{threshold}, for each
  sentence $\f \in \fm$ there is a $j_\f\in\mathbb{N}$ such that   $P(\f) = B_{j_\f}(\f)$, and $B_k(\f) = B_{j_\f}(\f)$ for
  each $k\geq j_\f$.  Hence, what is peculiar to the countable
  case is the fact that the index $n$ at which $B_n(\f)$s equals
  $P(\f)$ may be distinct for distinct elements of $\fm$. Whether $\L$
  is finite or countable,
  $P(\f) = \displaystyle\lim_{k\to\infty} B_k(\f)$.
\end{remark}

\begin{theorem}\label{thm:prob}
\paolo{Let $\F = (F_k,m_k)_{k\in\mathbb{N}}$ be a DBM-sequence with $\{F_k\}_{k\in\mathbb{N}}$ a $\fm$-closed DBF-sequence free of deep contradictions. 
 Then, the function $P\colon\fm\to[0,1]$,  obtained by letting \[ P(\f) = \displaystyle\lim_{k\to\infty} B_k(\f)\]  
for any $\varphi\in\fm$, is a probability.}
\end{theorem}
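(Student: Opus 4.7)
The plan is to verify that the limit defining $P$ exists and that $P$ satisfies the two logical axioms (PL1) and (PL2) characterising finitely additive probability functions on $\fm$.

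First, I would establish existence of the limit. The $\fm$-closedness of $\{F_k\}_{k\in\mathbb{N}}$ guarantees that for each $\f\in\fm$ there is some $j_\f\in\mathbb{N}$ such that $F_{j_\f}$ is $\{\f\}$-closed. By Lemma \ref{threshold} (applicable because $\F$ is free of deep contradictions), $B_k(\f) = B_{j_\f}(\f)$ for every $k\geq j_\f$, so the sequence $(B_k(\f))_k$ is eventually constant and $P(\f) = B_{j_\f}(\f) \in [0,1]$.

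Next I would verify (PL1). Suppose $\vdash\theta$. Pick $k\geq j_\theta$. Each leaf $\a\in Le(F_k)$ decides $\theta$ at 0-depth (by the same argument used in Lemma \ref{threshold}, closedness propagates downward via $\land\mathcal{E}$). If $\a\vdash_0\neg\theta$, then since $\theta$ is a classical tautology we have $\a\vdash\bot$, whence $\a\vdash_0\bot$ by freedom from deep contradictions; such leaves have $m_k$-mass $0$ by Definition \ref{m0}. All remaining leaves lie in $b_k(\theta)$, so $B_k(\theta) = m_k(Le(F_k)) - m_k(inc(Le(F_k))) = 1$, i.e. $P(\theta)=1$.

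Then I would verify (PL2). Assume $\vdash\neg(\theta\land\f)$. Choose $k$ large enough so that $F_k$ is simultaneously $\{\theta\}$-closed, $\{\f\}$-closed and $\{\theta\lor\f\}$-closed (such $k$ exists by $\fm$-closedness, and once attained, closedness is preserved at greater depths). Partition the non-inconsistent leaves of $F_k$ into four classes according to which of $\theta,\neg\theta$ and which of $\f,\neg\f$ they 0-derive. The key observation is that the class with $\a\vdash_0\theta$ and $\a\vdash_0\f$ is empty among non-inconsistent leaves: an application of $(\land\mathcal{I})$ gives $\a\vdash_0\theta\land\f$, while $\vdash\neg(\theta\land\f)$ entails $\a\vdash\bot$, and deep-contradiction freedom yields $\a\vdash_0\bot$. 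Using $(\lor\mathcal{I}1),(\lor\mathcal{I}2)$ for the positive direction and $(\neg\lor\mathcal{I})$ together with $\{\theta\lor\f\}$-closedness for the negative, one sees that $b_k(\theta\lor\f)$ is exactly the disjoint union of $b_k(\theta)$ and $b_k(\f)$. Additivity of $m_k$ then gives $B_k(\theta\lor\f)=B_k(\theta)+B_k(\f)$, and passing to the limit yields $P(\theta\lor\f)=P(\theta)+P(\f)$.

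The main conceptual obstacle, and the step I would give most care to, is the disjointness-and-coverage argument in (PL2): one must exploit freedom from deep contradictions in precisely the right place to rule out simultaneous 0-derivations of $\theta$ and $\f$ among consistent leaves, and invoke the IntElim rules of Table \ref{intel} to show that closed leaves cleanly partition between deriving $\theta\lor\f$ positively or negatively. Everything else is routine bookkeeping with $m_k$ and Definition \ref{bk}.
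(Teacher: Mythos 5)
Your proposal is correct and follows essentially the same route as the paper: reduce to a sufficiently deep, simultaneously closed forest via Lemma \ref{threshold}, then obtain (PL2) from the disjoint decomposition of $b_k(\theta\lor\f)$ and the additivity of $m_k$. The only (harmless) divergence is that you handle the degenerate cases directly from freedom of deep contradictions -- showing the leaves deriving $\neg\theta$ under $\vdash\theta$, and the consistent leaves deriving both $\theta$ and $\f$ under $\vdash\neg(\theta\land\f)$, are $0$-depth inconsistent -- whereas the paper routes these steps through the approximation property $\vdash=\lim_k\vdash_k$ and Proposition \ref{bkprop}(1)--(2).
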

\begin{proof}
To prove that $P$ is a probability function, we need to show that it satisfies the two properties (PL1) and (PL2) in the introduction.

For (PL1), assume $\vdash\f$ for some $\f\in\fm$. Then, since the DBLLs approximate $\vdash$ \cite{DAgostino2013} there is a $i$ such that $\vdash_i\f$. 
On the other hand, since $\{F_k\}_{k\in\mathbb{N}}$ is $\fm$-closed, there is a  $\{\f\}$-closed $j$-depth forest $F_j$ free of deep contradictions. Hence, by Proposition \ref{bkprop} we will then have  $B_j(\f) =1$, and by Lemma \ref{threshold}, $B_n(\f) =1$ for each $n\geq \max\{j,i\}$  . 

Let us now assume that $\vdash\neg(\f\land\p)$. We want to show that $P(\f\lor\p)=P(\f)+P(\p)$. 
As before, for some $i$, we will have that $B_i(\f\land \p ) =0$, and the same will hold   for any $n\geq i$. 

Now, since $\F$ is $\fm$-closed, there will be a $j$ such $F_j$ is $\{\f\}$-closed. 
and a $k$ such that $F_k$ is $\{\p\}$-closed. Now, let us consider any $n\geq \max\{i,j,k\}$. First, since $n\geq i$, we have $B_n(\f\land\p)=m_n(b_n(\f\land\p)) = 0$. Let us now consider the set $b_n(\f\lor\p)$. 
Since $n\geq \max\{j,k\}$, $F_n$ will be both $\{\f\}$-closed and $\{\p\}$-closed, hence we will have 
%

\begin{align}\label{bnsum}
\begin{split}
b_n(\f\lor\p) & =  \{ \a\in Le(F_n) \mid\a\vdash_0 \f\lor\p)\} \\ & =  \{ \a\in Le(F_n) \mid\a\vdash_0 \f, \a \vdash_0 \p)\} \\  &  \cup \{\a\in Le(F_n)\mid \a\vdash_0\f, \a\vdash_0\neg\p\} \\  
& \cup \{\a\in Le(F_n)\mid \a\vdash_0\neg \f, \a\vdash_0\p\} 
\end{split} 
\end{align}
where all three sets are clearly disjoint. On the other hand, we have 
\[ \{\a\in Le(F_n)\mid \a\vdash_0\f, \a\vdash_0\p\} = \{ \a\in Le(F_j)\mid \a\vdash_0\f\land\p\} \]
and  $m_n( \{\a\in Le(F_n)\mid \a\vdash_0\f\land\p\})= m_n(b_n(\f\land\p))= 0$. 
From this and Equation \ref{bnsum},  it follows that $m_n(b_n(\f\lor\p))$ equals:
\begin{align*}
  m_n \{\a\in Le(F_n)\mid \a\vdash_0\f, \a\vdash_0\neg\p\} + m_n
 (\{\a\in Le(F_n)\mid \a\vdash_0\neg \f, \a\vdash_0\p\})\\ = m_n(b_n(\f)) + m_n(b_n(\p))
\end{align*} 
From the latter, we finally get $B_n(\f\lor\p) = B_n(\f) + B_n(\p)$. This will hold for any $n\geq \max \{i,j,k\}$, hence  we finally obtain our desired result, i.e. $P(\f\lor\p) = P(\f) + P(\p)$.
 \end{proof}

\section{When $k-$depth Belief functions are tractable}\label{sec:complex}
We now turn to identifying the conditions under which the approximations granted by Theorem \ref{thm:approxprob} are computationally feasible. 
\begin{definition}\label{uniform}
\paolo{A sequence $\F = (F_k)_{k\in\mathbb{N}}$ of depth-bounded forests with initial support $\init$ is said to be:
\begin{itemize}
\item \emph{uniform} iff, for each $k$, all the nodes in the trees in $F_k$ contain the same hypothetical information, with the exception of the root. 
\item \emph{analytic} iff it is $S(\init)$-maximal, and all the nodes in the trees in $F_k$ contain as hypothetical information only formulas in $S(\init)$. 
} 
\item A \emph{$\spc$-maximal uniform(analytic) sequence} is a sequence which is $\spc$-maximal among the uniform(analytic) sequences.
\end{itemize}
\end{definition}
Note that the restriction to uniform sequences is immaterial for the result in Theorem \ref{thm:approxprob}, since the approximating sequence built in the proof is based on $\{\zinf\}$, hence it is trivially uniform. 

\begin{example}
Let $\init = \{p\lor q, s \lor r\}$. The 1-depth forest
\begin{center}
\Tree[.$(p\lor q)$ [.$p$ ] [.$\neg p$ ]] \ \ 
\Tree[.$(s\lor r)$ [.$p$ ] [.$\neg p$ ]] \ \ 
\end{center}
is uniform, while the 1-depth forest
\begin{center}
\Tree[.$(p\lor q)$ [.$p$ ] [.$\neg p$ ]] \ \ 
\Tree[.$(s\lor r)$ [.$s$ ] [.$\neg s$ ]] \ \ 
\end{center}
is not. Moreover, the forests in Example \ref{ex:maxnomax} are all trivially uniform, since they reduce to just single trees, while the forests at Depth 1 in Example \ref{ellsberg} are not uniform. 
\end{example}
Let us now consider two core reasoning tasks with our depth-bounded belief functions.  
Following \cite{Paris1994,finger15,hansen2000,FHM90}, which investigate the complexity of reasoning with probability functions, 
we assume that an agent is provided with $n$ linear constraints on the sentences $\f_1,\dots,\f_p$, of the form:
\begin{equation}\label{preclsystem}
\sum\limits_{j=1}^p d_{ij} B(\f_j) \leq z_i \ \  \ \  i=1,\dots,n \ \ \ d_{ij}, z_i \in \mathbb{Q}.
\end{equation}
Let us equivalently rewrite \eqref{preclsystem} as
\begin{equation}\label{clsystem}
\sum\limits_{j=1}^{p_1} a_{ij} B(\f_j) + w_i \leq \sum\limits_{j=1}^{p_2} b_{ij} B(\f_j) +v_i
\end{equation}
where $i=1,\dots,n \ \ \ a_{ij},b_{ij},w_i,v_i \in \mathbb{N},  p_1+p_2 =p.$
The GENPSAT problem (see e.g. \cite{CCM17}) consists in finding out whether there is a probability function over $\fm$ satisfying such a system. It is well-known that GENPSAT, as well as other minor variants involving linear constraints, is NP-complete \cite{Paris1994,finger15}. 
For our present purposes, a variation on the problem will be more appropriate: \paolo{we will rather aim to find out whether there are depth-bounded Belief and Plausibility functions, which can be seen as approximations of a probability function satisfying the given linear constraints.} We will thus consider decision problems which stand to our $k$-depth logics and $k$-depth belief functions as the GENPSAT problem stands to classical logic and classical probability functions.

Let us start from the 0-depth case: \\

\noindent {$\mathrm{GENSAT_0}$ Problem } \\
\noindent{\bf INPUT}: The set of $p$ sentences and $n$ linear constraints in \eqref{clsystem}.\\
\noindent {\bf PROBLEM}: 
Find a 0-depth belief function over $\init =\{ \f_1,\dots,\f_p\}$ such that for each $i\in\{1,\dots,n\}$ :
$$ \sum\limits_{j=1}^{p_1} a_{ij} B_0(\f_j) + w_i \leq \sum\limits_{j=1}^{p_2} b_{ij} Pl_0(\f_j) +v_i$$ 
where $ a_{ij},b_{ij},w_i,v_i \in \mathbb{N}$.  

By Definition \ref{bk} the problem boils down to finding a solution for the following $n$ system of linear inequalities in unknowns $m_0(\f_1),\dots, m_0(\f_p)$:
\begin{equation}\label{0system}
\begin{aligned}
\sum\limits_{j=1}^{p_1} a_{ij} \sum\limits_{\substack{k= 1,\dots,p_1 \\ \f_k \vdash_0 \f_j}} m_0(\f_k) + w_i & \leq   \sum\limits_{j=1}^{p_2} b_{ij} \sum\limits_{\substack{k= 1,\dots,p_2 \\ \f_k \not\vdash_0 \neg\f_j}} m_0(\f_k) + v_i   &  i=1,\dots,n \\
m_0(\f_j)& =0 & \text{ \ if \ } \f_j\vdash_0\contr.
\end{aligned}
\end{equation}

Let us recall that $|inc(\init)|= |\{\a\in \init\mid \a\vdash_0 \bot\}|$ stands for the count of inconsistent sentences in the support.
\begin{theorem}\label{poly0}
The $\mathrm{GENSAT_0}$ problem is solvable in $\mathbf{PTIME}(|S(\init)|+n)$.
\end{theorem}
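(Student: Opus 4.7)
The strategy is to reduce $\mathrm{GENSAT_0}$ to a linear feasibility problem of polynomial size whose coefficients can be computed in polynomial time, and then invoke the classical polynomial-time algorithms for linear programming over $\mathbb{Q}$ (Khachiyan's ellipsoid method, say).

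First, by Definition \ref{m0} and Definition \ref{bk}, a 0-depth Belief function over $\init = \{\f_1,\dots,\f_p\}$ is entirely determined by the $p$ values $m_0(\f_1),\dots,m_0(\f_p)$, subject to the mass-function constraints $m_0(\f_j)\geq 0$, $\sum_{j=1}^{p} m_0(\f_j) = 1$, and $m_0(\f_j)=0$ whenever $\f_j \vdash_0 \contr$. Expanding $B_0(\f_j)$ and $Pl_0(\f_j)$ in each of the $n$ inequalities of \eqref{clsystem} as displayed in \eqref{0system} yields a linear system in the unknowns $m_0(\f_1),\dots,m_0(\f_p)$, and its feasibility is precisely equivalent to the existence of a 0-depth Belief function solving $\mathrm{GENSAT_0}$.

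Second, I would show that the coefficient matrix of this system can be assembled in polynomial time in $|S(\init)| + n$. Its entries are determined by the $O(p^2)$ queries of the form $\f_k \vdash_0 \f_j$ and $\f_k \vdash_0 \neg \f_j$, together with the $p$ queries $\f_j \vdash_0 \contr$ needed to pin down $inc(\init)$. Since $p \leq |\init| \leq |S(\init)|$, and since by the tractability of $\vdash_0$ recalled immediately after Definition \ref{kdepthlogic} each such query is decidable in time polynomial in $|S(\{\f_k,\f_j\})| \leq |S(\init)|$, the full matrix can be built in polynomial time in $|S(\init)|$. Concatenating with the $n$ inequalities of \eqref{clsystem} then gives a linear program of size polynomial in $|S(\init)| + n$.

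Finally, feasibility of the resulting system of rational linear inequalities is decidable in polynomial time in the size of its encoding, which establishes the claimed $\mathbf{PTIME}(|S(\init)|+n)$ bound. The only mildly delicate point in the argument is the bookkeeping of input sizes: one must make sure that $p$, the number and cost of $\vdash_0$ queries, and the bit-length of the LP coefficients all stay polynomial in $|S(\init)| + n$. Once this is verified, the result follows by a direct composition of the tractability of $\vdash_0$ with the tractability of linear programming; no novel combinatorial argument is required.
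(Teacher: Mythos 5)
Your proposal is correct and follows essentially the same route as the paper's own proof: reduce $\mathrm{GENSAT_0}$ to the linear feasibility system \eqref{0system}, observe that its coefficients are determined by polynomially many $\vdash_0$-queries each decidable in time polynomial in $|S(\init)|$ by the results of D'Agostino et al., and conclude by the polynomial-time solvability of linear programs of polynomial size. Your added attention to the bit-length bookkeeping is a welcome refinement but does not change the argument.
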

\begin{proof}
By results in \cite{DAgostino2013}, finding out whether $\f_k \vdash_0 \f_j$, $\f_k\not\vdash_0\neg\f_j$ and  whether $\f_j\vdash_0\contr$ requires time polynomial in $|S(\init)|$. Note that the system \eqref{0system} has size $p \times (n+p+1+|inc(\init)|)$, so finding a solution requires polynomial time as well. Hence our problem is in {\bf PTIME}($|S(\init)|+n$).
 \end{proof}
Let us now consider the problem of finding out whether there is a $k$-depth belief function, for a given $k>0$, which approximates a probability function satisfying the constraints in \eqref{clsystem}. The idea is to fix $k$ and an initial $\init$, and then consider all the $S(\init)$-maximal depth-bounded forests $F_k$, in order to check whether there is a depth-bounded Belief function $B_k$ defined on such forests, which satisfies the constraints. We will take as initial support the set $\init = \{\f_1,\dots,\f_p\}$ and assume that the forests are uniform. Note that this assumption prevents the number of forests to grow exponentially with $|\init|$.
Let us now fix a $k>0$ and consider the following problem:\newline

\noindent{ $\mathrm{GENSAT_k}$ Problem.}

\noindent{\bf INPUT}: The $n$ constraints in \eqref{clsystem}.

\noindent {\bf PROBLEM}: Find out whether there is a finite
DBM-sequence $\F = (F_j,m_j)_{j\leq k}$, where $\{F_j\}_{j\leq k}$ is a uniform, analytic,
 finite DBF-sequence based on $\init = \{ \f_1,\dots,\f_p \}$, such that 
$$ \sum\limits_{j=1}^{p_1} a_{ij} B_k(\f_j) + w_i \leq \sum\limits_{j=1}^{p_2} b_{ij} Pl_k(\f_j) +v_i$$
where $i=1,\dots,n \ \ \ a_{ij},b_{ij},w_i,v_i \in \mathbb{N},  p_1+p_2 =p.$

Note that, also in this case with $k>0$, we have still taken $\init = \{\f_1,\dots,\f_p\}$. Since these are the sentences involved in the formulation of the constraints, we can take them to be the initial information provided to an agent, i.e. the information that can be probabilistically quantified by the agent without the use of further hypothetical information.

Recalling Definition \ref{bk}, solving the problem $GENSAT_k$ amounts to 
finding an $F_k$ based on $\init = \{\f_1,\dots,\f_p\}$, such that the following system, in the unknowns $m_k(\a)$, for $\a \in Le(F_k)$, has a solution: 
\begin{equation}\label{ksystem}
\begin{aligned}
\sum\limits_{j=1}^{p_1} a_{ij} \sum_{\substack{\a\in b_k(\f_j) }} m_k(\a) + w_i & \leq \sum\limits_{j=1}^{p_2} b_{ij} \sum_{\substack{\a\in pl_k(\f_j) }} m_k(\a) + v_i & \text{ for each } i=1,\dots,n\\
m_k(\a) & \geq 0 & \text{ for each } \a \in Le(F_k) \\
\sum\limits_{\a\in Le(F_k)} m_k(\a) & = 1 & \\
m_k(\a)& =0 & \text{ if } \a\vdash_0\contr
\end{aligned}
\end{equation}
\begin{theorem}\label{polyk}
$\mathrm{GENSAT_k}$ is solvable in $\mathbf{PTIME}(|S(\init)|+n) $.
\end{theorem}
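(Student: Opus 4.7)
The plan is to extend the strategy behind Theorem \ref{poly0} by reducing $\mathrm{GENSAT_k}$ to polynomially many linear-feasibility instances, one per candidate DBF-structure. The extra work, compared with the $k=0$ case, is twofold: we must enumerate the relevant DBF-sequences, and we must cope with the larger combinatorial structure arising at depth $k$.

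I would first use uniformity and analyticity to bound the search space. Uniformity forces the branching at each depth $j \in \{1,\dots,k\}$ to be carried out with a single formula $\b_j$, common to all trees in $F_j$; analyticity restricts $\b_j$ to lie in $S(\init)$. Consequently, the number of candidate uniform DBF-structures with hypothetical information drawn from $S(\init)$ is at most $|S(\init)|^k$, which is polynomial in $|S(\init)|$ because $k$ is a fixed parameter of the problem, not part of the input. For each such structure, every leaf of $F_k$ has the form $\c \land \ell_1 \land \dots \land \ell_k$ with $\c \in \init$ and $\ell_j \in \{\b_j, \neg \b_j\}$, so $|Le(F_k)| \leq |\init| \cdot 2^k$ is polynomial and each leaf has description size polynomial in $|S(\init)|$.

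For each candidate structure I would then build the linear system \eqref{ksystem} and test its feasibility. The sets $b_k(\f_j)$, $pl_k(\f_j)$ and the collection of inconsistent leaves are computed by iterating tests of the form $\a \vdash_0 \f_j$, $\a \vdash_0 \neg \f_j$, $\a \vdash_0 \bot$, all polynomial by the tractability of $\vdash_0$ already exploited in the proof of Theorem \ref{poly0}. The resulting system has polynomially many variables (the masses $m_k(\a)$ for $\a \in Le(F_k)$) and polynomially many constraints, so its feasibility is decidable in polynomial time.

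The main obstacle is enforcing $S(\init)$-maximality, which is a quantification over the candidate pool rather than a local property of a single sequence. I would handle it by precomputing, for each $\f \in S(\init)$, the value $M(\f) := \max |\dec{Le(F_j)}{\f}|$ ranging over all candidate uniform DBF-sequences with hypothetical information in $S(\init)$ and over all $j \leq k$; this is polynomial because there are polynomially many candidates and the tests for $\vdash_0$ are polynomial. An enumerated candidate is retained iff, for every $\f \in S(\init)$, some $F_j$ in that candidate attains $M(\f)$: by Definition \ref{uniform}, these are exactly the uniform analytic sequences. Solving the feasibility problem for each surviving candidate and returning yes iff at least one of the systems is feasible settles $\mathrm{GENSAT_k}$ within $\mathbf{PTIME}(|S(\init)|+n)$, as claimed.
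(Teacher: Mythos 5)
Your proof follows essentially the same route as the paper's: enumerate the polynomially many uniform analytic candidate forests (polynomial only because $k$ is a fixed parameter), filter for $S(\init)$-maximality via polynomial-time $\vdash_0$ tests, and solve a polynomial-size linear feasibility system \eqref{ksystem} for each survivor. The only divergence is your candidate count of $|S(\init)|^{k}$ (one branching formula per level) versus the paper's $|S(\init)|^{2^k-1}$ (one formula per internal node position, shared across trees) -- a consequence of the ambiguity in Definition \ref{uniform} -- but both bounds are polynomial for fixed $k$, and your explicit argmax-based filtering for maximality is, if anything, spelled out more carefully than in the paper's sketch.
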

\begin{proof} 


Consider the number of possible forests of depth $k$, based on $\init$. Since we assume that the forests are uniform, this reduces to the number of possible trees, which in turn reduces to the possible choices of formulas for expanding each node. The number of such choices at depth $k$ will be $2^k-1$ and, since we restrict to analytic sequences, we will only consider sentences in $S(\init)$ as candidates. We thus obtain that the number of possible forests at depth $k$ is bounded above by $|S(\init)|^{2^k-1}$.

We then select among those $|S(\init)|^{2^k-1}$ different forests, those which are maximal with respect to  $S(\init)$, see Definition \ref{max}. This is verified  by running, for each forest, and for each formula $\f$ in $S(\init)$, the quadratic algorithm  e.g. in \cite{DAgostino2013}, to verify whether the leaves decide $\f$. 

For each remaining selected forest $F_k$ of depth $k$, the number of leaves in $Le(F_k)$ will be bounded above by $|\init|\cdot 2^k$, which is linear in $|\init|$ ( once $k$ is fixed, $2^k$ is constant).

We then need to check whether the set of sentences in $Le(F_k)$ may satisfy system \eqref{ksystem}. Each such system has size bounded by $2^k \times (n+ 2^k +1+ |inc(Le(F_k))|)$. 
Hence we have polynomially many systems of polynomial size. Finally, since solving each such system still requires polynomial time (by standard linear algebraic methods), we obtain our claim.
\end{proof}

The problems considered so far are decision problems, and in particular variants of the well-known SAT problem and GENPSAT problem for propositional satisfiability, and (generalized) probabilistic satisfiability, respectively. Another class of problems is also quite relevant in the area of probabilistic logic, as presented in \cite{Hailperin1996} and in subsequent developments \cite{Haenni2011}. The idea is to provide a probabilistic generalization of inference, by addressing the following problem:
\begin{center}
Let $\f_1,\dots, \f_p,\f$ be sentences. Take any probability function $P$ such that $P(\f_1),\dots, P(\f_p)$ obey a given set of constraints. What kind of constraints follow on $P(\f)$?
\end{center} 
In \cite{Hailperin1996,Haenni2011} 
 the constraints imposed on the probabilities of $\f_1,\dots,\f_p$ are typically  intervals, and the goal is to identify the ``smallest'' interval $[a,b]\subseteq [0,1]$ such that $P(\f)\in[a,b]$, whenever $P$ satisfies the constraints on the premises. This reduces in the classical setting to a linear programming problem. We will consider here a more general problem, allowing constraints as in \eqref{clsystem}, and searching for solutions based on depth-bounded belief functions. We have then the following: \\

\noindent{\bf $B_k$INF Problem}\\

\noindent{\bf INPUT}: The $n$ constraints in \eqref{clsystem}, and an arbitrary sentence $\f$.

\noindent {\bf PROBLEM}: Compute the smallest interval $[a,b]$ such that, for any
 $B_k, Pl_k$  which satisfy the constraints in \eqref{clsystem} as in the $\text{ GENSAT}_k$ Problem, 
 $B_k,Pl_k$ satisfy $a\leq B_k(\f)$, $Pl_k(\f) \leq b$.

As we now show, we can just adapt the classical techniques based on linear programming, and Theorem \ref{polyk} in this section, to obtain that also the functional problem $B_k\text{INF}$ is solvable in polynomial time. Again, we will let $\init =\{\f_1,\dots,\f_m\}$,
\begin{theorem}\label{inf}
The problem $B_k\text{INF}$ is solvable in $\mathbf{FPTIME}(|S(\init)|+ S(\f)+ n)$.
\end{theorem}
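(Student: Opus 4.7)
My approach is to reduce the $B_k\text{INF}$ problem to solving polynomially many linear programs, piggybacking on the machinery already developed in the proof of Theorem \ref{polyk}. The crucial observation is that, once a uniform analytic forest $F_k$ is fixed, both $B_k(\f) = m_k(b_k(\f))$ and $Pl_k(\f) = m_k(pl_k(\f))$ are \emph{linear} in the mass variables $m_k(\a)$, $\a \in Le(F_k)$. Hence the extremal values of these quantities over the polyhedron defined by the constraints \eqref{ksystem} can be obtained by standard linear programming, and the overall extremal value is obtained by aggregating over the (polynomially many) candidate forests.

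\textbf{Steps.} First, exactly as in the proof of Theorem \ref{polyk}, I enumerate all uniform, $S(\init)$-analytic finite DBF-sequences $(F_j)_{j\leq k}$ based on $\init = \{\f_1,\dots,\f_p\}$; there are at most $|S(\init)|^{2^k-1}$ of them, which is polynomial in $|S(\init)|$ since $k$ is fixed. For each such $F_k$, I compute the sets $b_k(\f)$ and $pl_k(\f)$ by running, for every leaf $\a \in Le(F_k)$, the quadratic $0$-depth decision procedure of \cite{DAgostino2013} on $\a \vdash_0 \f$ and $\a \vdash_0 \neg\f$; since $|Le(F_k)|$ is linear in $|\init|$ for fixed $k$, this takes time polynomial in $|S(\init)| + |S(\f)|$. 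I then set up two linear programs over the polytope \eqref{ksystem}: one \emph{minimising} $\sum_{\a \in b_k(\f)} m_k(\a)$, whose optimum I call $a_{F_k}$, and one \emph{maximising} $\sum_{\a \in pl_k(\f)} m_k(\a)$, whose optimum I call $b_{F_k}$. Each LP has polynomial size and is solvable in polynomial time by standard linear programming algorithms; forests for which \eqref{ksystem} is infeasible are simply discarded. Finally I output $a := \min_{F_k} a_{F_k}$ and $b := \max_{F_k} b_{F_k}$ over the polynomially many candidate forests.

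\textbf{Correctness and main obstacle.} By the formulation of the $\mathrm{GENSAT}_k$ problem, every admissible pair $(B_k, Pl_k)$ arises from some uniform analytic DBF-sequence $F_k$ together with a mass assignment $m_k$ feasible for \eqref{ksystem}; conversely, every such pair is considered by the enumeration. Because $B_k(\f)$ and $Pl_k(\f)$ are linear in $m_k$, their extrema over the feasibility polytope are attained at LP optima, so aggregating across forests returns exactly $\inf B_k(\f)$ and $\sup Pl_k(\f)$. Hence $[a,b]$ is the tightest interval required by the specification. The point I expect to be the main obstacle is precisely this combinatorial coverage claim, namely that restricting to uniform analytic sequences does not exclude any extremal witness; but this parallels the argument already used to bound the search space in Theorem \ref{polyk} and follows from Definitions \ref{quantsequence} and \ref{uniform}. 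The total running time is a polynomial number of polynomial-size LPs and $0$-depth derivation checks, giving the bound $\mathbf{FPTIME}(|S(\init)| + |S(\f)| + n)$.
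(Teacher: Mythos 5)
Your proposal is correct and follows essentially the same route as the paper's own proof: enumerate the polynomially many uniform analytic forests as in Theorem \ref{polyk}, then for each one solve a pair of linear programs minimising $\sum_{\a\in b_k(\f)} m_k(\a)$ and maximising $\sum_{\a\in pl_k(\f)} m_k(\a)$ over the polytope \eqref{ksystem}. Your explicit aggregation step ($a := \min_{F_k} a_{F_k}$, $b := \max_{F_k} b_{F_k}$) is a slightly more careful spelling-out of what the paper leaves implicit, but the argument is the same.
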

\begin{proof}
By the argument in the proof of Theorem \ref{polyk}, we identify polynomially many analytic uniform DBF-sequences based on $\init = \{\f_1,\dots,\f_m\}$. For each such system our problem is solved by considering the following linear programming problem. 
\begin{equation}\label{fkminimize}
\begin{aligned}
\textbf{ minimize  }  &     \sum_{\a\in b_k(\f) } m_k(\a)
  \\
\text{ subject to } & \\
\sum\limits_{j=1}^{p_1} a_{ij} \sum_{{\a\in b_k(\f_j) }} m_k(\a) + w_i & \leq \sum\limits_{j=1}^{p_2} b_{ij} \sum_{{\a\in pl_k(\f_j) }} m_k(\a) + v_i & \text{ for each } i=1,\dots,n\\
m_k(\a) & \geq 0 & \text{ for each } \a \in Le(F_k) \\
\sum\limits_{\a\in Le(F_k)} m_k(\a) & = 1 & \\
m_k(\a)& =0 & \text{ if } \a\vdash_0\contr
\end{aligned}
\end{equation}

and similarly

\begin{equation}\label{fkmaximize}
\begin{aligned}
\textbf{ maximize  }  &     \sum_{\a\in pl_k(\f) } m_k(\a)
  \\
\text{ subject to } & \\
\sum\limits_{j=1}^{p_1} a_{ij} \sum_{{\a\in b_k(\f_j) }} m_k(\a) + w_i & \leq \sum\limits_{j=1}^{p_2} b_{ij} \sum_{{\a\in pl_k(\f_j) }} m_k(\a) + v_i & \text{ for each } i=1,\dots,n\\
m_k(\a) & \geq 0 & \text{ for each } \a \in Le(F_k) \\
\sum\limits_{\a\in Le(F_k)} m_k(\a) & = 1 & \\
m_k(\a)& =0 & \text{ if } \a\vdash_0\contr
\end{aligned}
\end{equation}
Solving the two problems will provide our desired bounds for $\f$. This can be achieved in polynomial time invoking algorithms for linear programming \cite{Schrijver1987}, since the size of the systems to be solved is polynomial as well in the size of $\f_1,\dots,\f_p,\f$.
\end{proof}
\section{Conclusions}
\label{sec:conclusion}

The results of this paper suggest that there should be no methodological tension between requiring that
\begin{itemize}
\item[i)] models of uncertain reasoning should be aimed at
  realistic, rather than idealised agents;
\item[ii)] the quantification of uncertainty should use all the information actually possessed by an agent and
\item[iii)] the quantification of uncertainty should keep track of the reasoning which leads to the hypothetical resolution of uncertainty -- i.e. using information which the agent does not actually possess but which can be obtained by suitably exploiting reasoning by cases.
\end{itemize}

In consonance with the general view  outlined in \cite{DAgostino2017}  the results illustrated in this paper suggest  that a key to achieving such a  realistic model of uncertain reasoning lies in grounding it in logics which are sensitive to information, rather than classical truth-conditions.

\hykel{Not only our main results may lead to a reconciliation between additive and non-additive approaches to uncertain reasoning, but they also suggest that logic may play a much more prominent role in uncertain reasoning compared to what is seen in current practice.  Whilst AI has long benefitted methodologically from logical approaches, Economic theory has done this to a much lesser extent. Building on Examples \ref{ellsberg} and \ref{ellsbergvariant} future work will focus on this promising application of logical methods in uncertain reasoning.} 
%

As to AI applications, future work will be devoted 
first to the problem of determining, for a specific application-driven problem, the bounds $k$ for which the uncertainty quantification provided by $B_k$ turns out to be adequate. In addition, better algorithms for the problems in Section \ref{sec:complex} are desirable and will be investigated. Let us note that the proofs in Section \ref{sec:complex} contain essentially brute-force search algorithms, where the search space is bounded by the fixed required depth, and only this prevents an exponential explosion. This leaves open the investigation of algorithms which are more tightly calibrated to exploring only the necessary maximal forests. 
In future work, we will develop such algorithms and provide suitable time estimations


\paolo{Finally, as we pointed out in Remark \ref{rem:maximality} the notion of maximal hypothetical uncertainty resolution makes our framework interpretable normatively.} Moreover the tractability results of Section \ref{sec:complex} suggest that reasoning with depth-bounded belief functions may be within the means of realistic agents. Hence, the framework introduced in this paper may be taken as a benchmark for the empirical analysis of what experimental subjects should be expected to do when performing uncertain reasoning tasks. 




\section*{Acknowledgments}
This research was funded by the Department of Philosophy \lq\lq Piero Martinetti" of the University of Milan under the Project \lq\lq Departments of Excellence 2018-2022" awarded by the Ministry of Education, University and Research (MIUR). We are grateful to Marcello D'Agostino for very useful conversations on the theory of Depth-Bounded Boolean Logics which motivated a substantial part of this research. 
We are also very grateful to two anonymous reviewers for their unrestrained criticisms, which certainly lead this to becoming a much better paper.

\end{document}